\documentclass[%
 reprint,
 amsmath,amssymb,
 aps,prx
]{revtex4-2}

\usepackage{amsmath}
\usepackage{bm}
\usepackage{amsfonts}
\usepackage[cal=cm]{mathalpha}
\usepackage{nicematrix}
\usepackage{graphicx}
\usepackage{tcolorbox}
\usepackage[colorlinks=true, allcolors=blue]{hyperref}
\usepackage{amsthm}
\usepackage{algorithm}
\usepackage{algpseudocode}
\usepackage{hyperref}
\usepackage{epstopdf}
\usepackage{cancel}

\newtheorem{theorem}{Theorem}[section]

\begin{document}

\title{Explosive opinion {spreading with polarization and depolarization via} asymmetric perception}

\author{Haoyang Qian} 
\author{Malbor Asllani} 
 \affiliation{Department of Mathematics, Florida State University,
1017 Academic Way, Tallahassee, FL 32306, United States of America}

\begin{abstract}
Polarization significantly influences societal divisions across economic, political, religious, and ideological lines. Understanding these mechanisms is key to devising strategies to mitigate such divisions and promote depolarization. Our study examines how asymmetric opinion perception, modeled through nonlinear incidence terms, affects polarization and depolarization within structured communities. We demonstrate that such asymmetry leads to explosive polarization and causes a hysteresis effect responsible for abrupt depolarization. We develop a mean-field approximation to explain how nonlinear incidence results in first-order phase transitions and the nature of bifurcations. This approach also helps {in understanding} how opinions polarize according to underlying social network communities {and how these phenomena intertwine with the nature of such transitions}. Numerical simulations corroborate the analytical findings.
\end{abstract}

\maketitle


In recent years, the unprecedented increase in data availability and rapid progress in network science have greatly enhanced the understanding of complex social phenomena by adopting principles from physics to model interactions within social systems \cite{jackson2008social, lazer2009computational, centola_spread_2010, cinelli2021echo, castellano_statistical_2009, easley2010networks, newman_networks_2018}. This interdisciplinary approach has been instrumental in studying various aspects of social dynamics, including cultural dissemination \cite{axelrod1997dissemination}, language evolution \cite{nowak2002computational}, and crowd behavior \cite{helbing2000simulating}, among others. Of these, the study of opinion dynamics within social networks has gained significant attention, driven by the increasing influence of social media and digital communication platforms \cite{bail2018exposure}. Polarization is a critical issue in opinion dynamics, where two or more groups within a population hold opposing and sometimes extreme views \cite{bail2018exposure, baumann2020modeling, baumann2021emergence, ojer_modeling_2023}. Such polarization can lead to significant societal impacts, such as increased political division \cite{bail2018exposure} and the formation of echo chambers \cite{baumann2020modeling}. Understanding how opinions form, evolve, and spread within social networks is crucial for addressing contemporary social challenges, leveraging digital communication to achieve positive social outcomes, and comprehending the mechanisms driving polarization to develop strategies for fostering a more cohesive society \cite{balietti2021reducing}.

Opinion dynamics exhibit explosive transitions as a distinct feature, characterized by sudden and large-scale changes in social systems. For instance, exposure to opposing views on social media can increase political polarization, leading to abrupt shifts in public opinion \cite{bail2018exposure}. Similarly, the adoption of behaviors in online social networks can experience rapid and widespread changes once a critical threshold is reached \cite{centola_spread_2010}. In financial markets, herding behavior and information cascades can result in sudden crashes, illustrating the explosive nature of market dynamics \cite{sornette2003crash}. Models of collective behavior further emphasize how small changes can lead to large-scale social movements \cite{granovetter1978threshold}{, while} global cascades in networks {show} how minor perturbations can trigger explosive transitions in social systems \cite{watts2002simple}.

{Contemporary platforms for information dissemination—such as online social networks, digital news platforms, and other media channels—play a critical role in creating an uneven perception of opinions \cite{bail2018exposure, baumann2020modeling}. On one side, the overwhelming exposure to such platforms further amplifies this phenomenon, causing an opinion overestimation. On the other, being such platforms a standard means of information, they may lead to an opinion underestimation among individuals with limited access to them.} This letter aims to address two primary challenges in the study of opinion dynamics: identifying the minimal conditions under which explosive polarization or depolarization occurs, and demonstrating how these phenomena {intertwine with the nature of such transitions}. To this aim, we will focus on \emph{opinion perception}, defined as the cognitive and social processes through which individuals interpret and integrate the opinions of others, influenced by biases, social context, and interaction dynamics \cite{milli2021opinion}. Drawing an analogy to nonlinear incidence in epidemiological models, which has shown that such terms can significantly alter the dynamics \cite{liu_influence_1986, liu_dynamical_1987}, we will model asymmetric opinion perception through similar nonlinear terms and demonstrate how this leads to explosive transitions. Asymmetric interactions have proven to be decisive in optimizing search strategies in congested networks \cite{carletti2020nonlinear}, enhancing prevention strategies in epidemic control \cite{siebert2022nonlinear}, and generating power-law segregation patterns in vegetation \cite{de2024emergence}, to mention a few.

To gain deeper insights into the dynamics of opinion spread, we will adopt a mean-field approach to develop an analytically tractable model that allows for a bifurcation analysis \cite{newman_networks_2018}. By identifying key bifurcation points and understanding the hysteresis effects, we lay the groundwork for developing strategies to manage and influence opinion propagation. Additionally, this method opens new pathways for understanding how social communities induce polarization in social networks. Continuous phase transitions, as opposed to explosive ones, allow for pattern prediction near criticality \cite{cross_pattern_2009}. Accordingly, we show that opinion clusters are shaped according to the underlying communities, giving rise to polarization. Furthermore, polarization {shifts} when the bifurcation switches from continuous to discontinuous, and even more so as the transition becomes more abrupt, {culminating in an opinion shift as} the new opinion {takes over}.

Although models based on principles such as homophily \cite{dandekar2013biased, baumann2020modeling} or bounded confidence \cite{deffuant2000mixing, Rainer2002-RAIODA} have been successful in emulating opinion polarization, they often fail to capture the abrupt transitions of opinions commonly observed in social networks. In contrast, Ref. \cite{ojer_modeling_2023} uses a phase-coupled model to explain explosive depolarization, analogous to the abrupt synchronization observed in the Kuramoto model \cite{kuramoto1975self}. Understanding the transition from consensus to polarization necessitates studying the nature of bifurcations, which in turn requires continuous variable models \cite{strogatz_nonlinear_2007}. Compartmental models such as the Susceptible-Infected-Susceptible (SIS) one have long been used as a mathematical framework to describe the spread of opinions within a population \cite{kiss_mathematics_2017}. The SIS model has been applied to study information spread on social networks, where individuals repeatedly adopt and abandon opinions based on interactions \cite{castellano_statistical_2009}. It has also been used to analyze rumor spreading, where individuals repeatedly believe or disbelieve a rumor \cite{nekovee_theory_2007}, and the spread of behaviors and social norms, where individuals switch between conforming and deviating based on peer influence and personal experience \cite{granovetter_threshold_1978, centola_spread_2010}.


Unlike disease contagion, in opinion dynamics, unaffected individuals in contact with those who have switched their opinions may change their minds more easily than expected or may be reluctant to do so. To facilitate the introduction of asymmetric perception, we will {model individuals} as metanodes, each containing a fixed number \(N\) of noninteractive particles (units) {that represent fractions of the individual's opinion}. These units are classified {either as \(S_i\), carrying the old opinion, or as \(I_j\), adopting the new opinion and replacing the corresponding \(S_j\)}. A susceptible particle \(S_i\) can become influenced by interacting with \(I_j\) at a adoption rate \(\beta\), modeled by \(S_i + I_j \xrightarrow{\beta} I_i + I_j\). In addition, an influenced unit \(I_i\) can lose interest in or forget the new opinion, transitioning from \(I_i \xrightarrow{\gamma} S_i\) at a revision rate \(\gamma\). To contrast the symmetry of contagion in the linear incidence model described above, we introduce an additional term that disrupts the 1-to-1 ratio between susceptible units \(S_i\) and influenced units \(I_j\). Specifically, we consider that a susceptible unit \(S_i\) can switch state only through interactions with multiple influenced units \(I_j\) at a rate \(\alpha\), represented as:
\begin{equation}
S_i + \underbrace{I_j + I_j + \cdots + I_j}_{d} \xrightarrow{\alpha} I_i + d I_j\,,
\label{eq:NL_incid}
\end{equation}
but the reverse scenario, where multiple susceptible units \(S_i\) adopt the new opinion simultaneously influenced by a single unit \(I_i\), can also occur. {Intuitively, this can be understood as a \(d\)-times repeated exposure to the same individual opinion or, conversely, as a \(d\)-times faster adoption.} To derive the evolution equation of node \(i\), let \(x_i(t)\) denote the fraction of \(n\) individuals at node \(i\) who have adopted the new opinion at time \(t\), i.e., in the thermodynamic limit \(x_i(t) = \lim_{N \to \infty} {nI_i}/{N}\). Similarly, \(s_i(t) = \lim_{N \to \infty} {(N-n)S_i}/{N}\), reflecting the total population normalization, i.e., \(x_i(t) + s_i(t) = 1\), represents the probability of node \(i\) who has not yet adopted the new opinion and is susceptible to being influenced by others who have adopted it. Using the mass action law, {the new opinion's probability evolution over a $\Omega$-nodes network is:}
\begin{equation}
\dot{x_i} = -\gamma x_i + \beta \left(1 - x_i\right) \sum_{j} A_{ij} x_j \left(1 + \alpha x_{j}^{d-1}\right)\,\forall i\,,
\label{eq:ME}
\end{equation}
where \(A_{ij}\) represents the adjacency matrix entries, with \(A_{ij} = 1\) indicating a connection between nodes \(i\) and \(j\), all diagonal elements \(A_{ii} = 0\), and for simplicity of representation \(\alpha \to \alpha\beta\). Here, {the term \(\beta \left(1+\alpha x_{j}^{d-1}\right)\) represents a modified density-dependent contagion rate, i.e., \(\hat{\beta}(x_j)\), where} \(d\) is the degree of the nonlinear incidence, {typically estimated from experimental data---a task that goes beyond the goal of this paper. In social dynamics, in particular, the parameter \(d\) can generalize to include non-integer values \cite{horn1972general},} indicating \emph{underestimated} perception for \(d > 1\) and \emph{overestimation} for \(0 < d < 1\).  Eq. \eqref{eq:ME} models the dynamics of opinion spread in a network, capturing the effects of both linear and nonlinear incidence interactions to reflect the complex influence of human perception. We will consider this dynamics throughout this paper to understand {the nature of transitions driving} opinion {polarization within} a social network.

\begin{figure*}
    \centering
        \includegraphics[width=\textwidth]{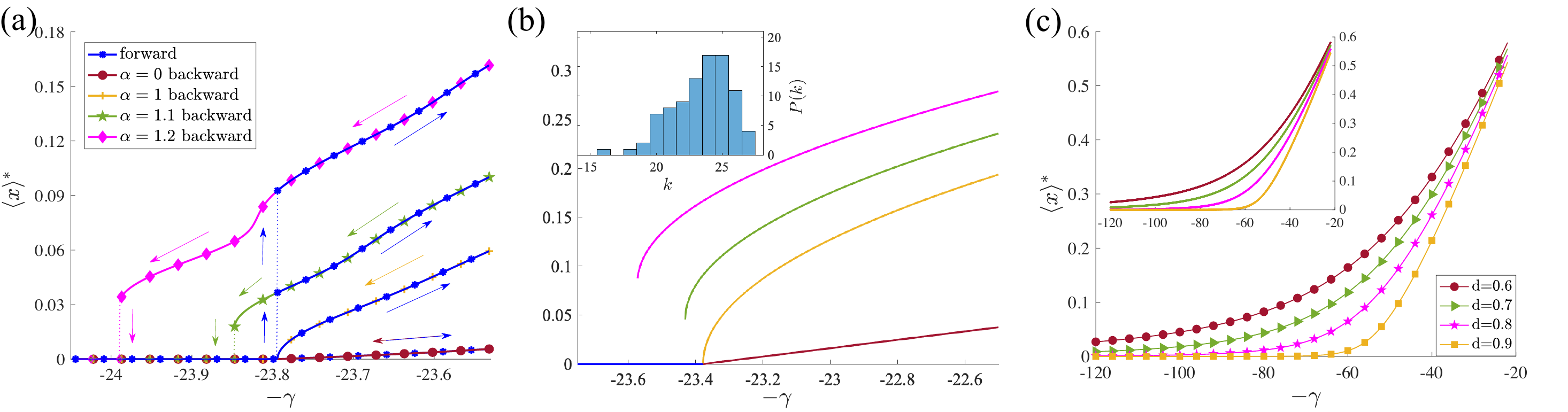}
    \caption{The mean equilibrium state, \(\langle x \rangle^*\), is plotted against the recovery rate \(-\gamma\). {Understimate (\(d=2\)):} (a) \textit{Numerical}: Forward direction (blue asterisks, blue arrows) and backward direction (other colors, arrows). For \(\alpha = 0\) (red circles), a transcritical transition occurs. For \(\alpha < 1\) (purple triangles for \(\alpha = 0.8\), yellow pluses for \(\alpha = 1\)), a supercritical bifurcation is observed. For \(\alpha > 1\) (green stars for \(\alpha = 1.1\), magenta diamonds for \(\alpha = 1.2\)), the transition is discontinuous (explosive), showing hysteresis and resistance to depolarization, leading to an abrupt no-opinion switch. (b) \textit{Analytical}: Equilibrium states from Eq. \eqref{eq:ME} confirm bistability in numerical results, matching the same \(\alpha\) values with corresponding colors. The inset shows the degree distribution. {(c) Overstimate (\(0 < d < 1\)): (Inset) Numerical equilibrium states of the average opinion and (inset) the analytical approximation with the same color legend. The network used is an unweighted modular one of 90 nodes, divided into 3 Erd\H{o}s--R\'enyi modules, each with a connection probability of \(p = 0.8\), with single edges linking the modules, and \(\beta = 1\).}}
    \label{fig:Fig1}
\end{figure*}


We can determine the critical point where the system globally adopts the new opinion by introducing a small perturbation \(\delta \mathbf{x}\) to the zero equilibrium state \( \mathbf{x}^{\ast} = \mathbf{0}\). Substituting \( \mathbf{x} = \mathbf{x}^{\ast} + \delta \mathbf{x} \) in Eq. \eqref{eq:ME} and after a straightforward linearization, we obtain \(\dot{\delta \mathbf{x}} = (-\gamma \mathbb{I} + \beta \mathbf{A}) \delta \mathbf{x}\), where \( \mathbf{J} = -\gamma \mathbb{I} + \beta \mathbf{A} \) is the Jacobian matrix. The spectrum of \(\mathbf{A}\) is shifted left by \(\gamma\), indicating that stability is determined by the largest eigenvalue of the adjacency matrix, \(\lambda_A^{\text{max}}\), with \(\lambda_J = -\gamma + \beta \lambda_A\) where all eigenvalues are real due to the network's undirectedness. The stability of the system depends solely on the contagion and recovery rates, and is independent of the nonlinear incidence. 
{The} new opinion spreads if the largest eigenmode of the network{, scaled with the contagion rate \(\beta\), exceeds} the recovery rate {(\(\beta\lambda_A^{\text{max}} > \gamma\))}, and dies down otherwise, a feature characterizing the SIS model, \cite{newman_networks_2018, kiss_mathematics_2017}.

In Fig. \ref{fig:Fig1} (a), we use the reversion rate \(\gamma\) as the control parameter and record the equilibrium state of the system described by Eq. \eqref{eq:ME} as the average value of the opinion density across the nodes \(\langle x \rangle = \sum_i x_i / \Omega\) for different values of \(\alpha\). {To allow the new and old opinions to polarize, we consider a network with communities, represented, without loss of generality, by a toy model of a single-linked, three-module connected graph.} As predicted, the first critical point is reached when \(\gamma\) equals \(\lambda_A^{\text{max}} \approx 23.8\) for all the scenarios considered {(with \(\beta=1\))}. The most notable feature, however, is that in the absence of nonlinear incidence (\(\alpha = 0\)) and when \(\alpha \leq 1\), the equilibrium curves are continuous and appear respectively linear and parabolic, characteristic of transcritical and supercritical pitchfork bifurcations. When \(\alpha > 1\), a first-order (explosive) phase transition emerges, with the discontinuity jump increasing with \(\alpha\). Conversely, when the control parameter \(-\gamma\) decreases, the backward trajectory manifests hysteresis, maintaining the {new opinion before reaching a second critical point, where it abruptly reverts to the old one}. The bistability observed here aligns with the population's resilience to {revisiting the old opinion}. Furthermore, when such depolarization occurs, it results in an instant switch back to the {original} state.

To understand the mechanisms underlying explosive polarization/depolarization, we will reduce the system \eqref{eq:ME} to a one-dimensional equation by considering the evolution of the average opinion.
To decouple the correlated terms, we make the following considerations: first, we assume that highly connected nodes are more influenced by their neighbors' opinions, as follows: \(x_i(t) \sim k_i\). Although this assumption underpins the well-known degree-based mean-field (DBMF) approximation \cite{newman_networks_2018, pastor2001epidemic_1, pastor2001epidemic_2, barthelemy2004velocity, barthelemy2005dynamical}, it simplifies the system's dimensionality to the maximum degree range, thus limiting analytical exploration. To facilitate analytical progress, we introduce a second assumption: a narrow {(i.e., small \(\left(k_{\text{max}} - k_{\text{min}}\right)/\langle k \rangle\))} and symmetric degree distribution \(P(k)\). 
Based on such considerations we express \(x_i = \langle x \rangle + \delta x_i\) and \(k_i=\langle k \rangle + \delta k_i\) in terms of deviations \(\delta x_i\) and \(\delta k_i\) from their respective means to obtain:
\begin{align}
\langle \dot{x} \rangle = \left(\tilde{\beta}-\gamma\right) \langle x \rangle - \tilde{\beta} \langle x \rangle^2 + \alpha \tilde{\beta} \left(1 - \langle x \rangle\right) \langle x \rangle^d
\label{eq:mean_x}
\end{align}
where \(\tilde{\beta}=\beta\langle k \rangle\) and we have used the binomial approximation \(\left(\langle x \rangle + \delta x_j\right)^d \approx \langle x \rangle^d + d \langle x \rangle^{d-1} \delta x_j\). We have eliminated terms related to \(\sum_i \delta x_i=0\) and \(\sum_i \delta k_i=0\) following our assumptions. We have also neglected higher order terms \(\delta x_i \delta k_i\), as deviation terms \(\delta x_i\) and \(\delta k_i\) are assumed to be smaller than their respective means, i.e., \(\delta x_i \ll \langle x \rangle\) and \(\delta k_i \ll \langle k \rangle\). For details of the derivation refer to the Supplemental Material (SM). 

\begin{figure*}
    \includegraphics[width=\linewidth]{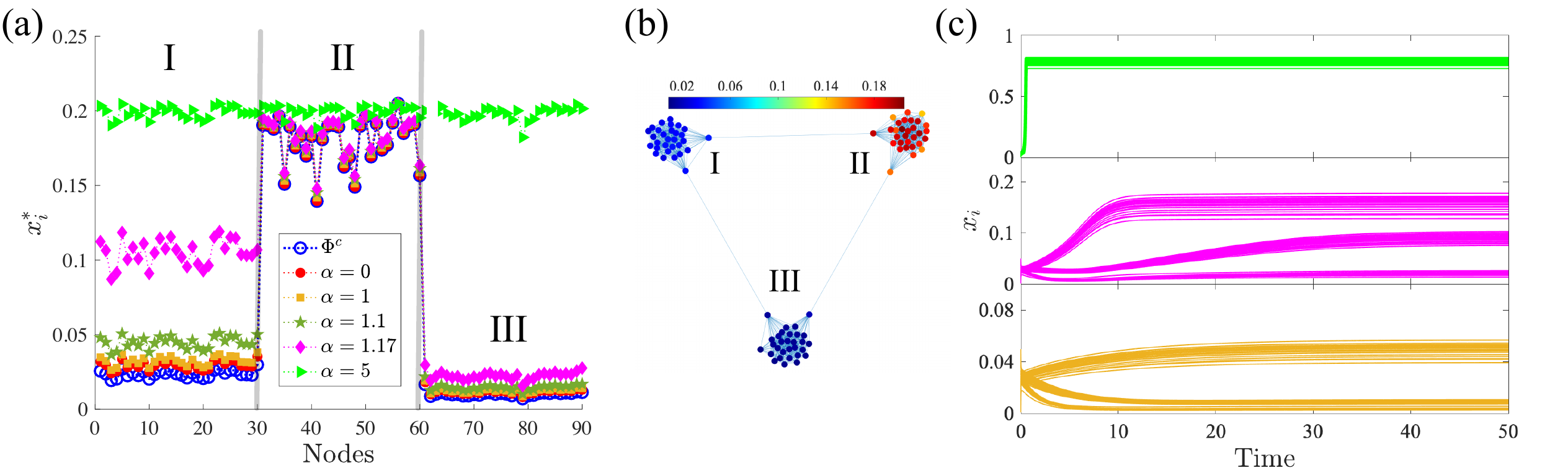}
    \caption{{(a)} The equilibrium state for each node \(x_i^*\), plotted for various values of \(\alpha\) is {scaled for comparison} to the critical eigenvector {, where \(-\gamma=23.734\)}. In the continuous phase transition (\(\alpha \leq 1\)), the pattern is similar to the critical eigenvector. The pattern starts to differ for \(\alpha = 1.1\) when the transition becomes discontinuous and {for \(\alpha = 2\), an opinion shift occurs as the discontinuity escalates}. {(b)} The {network} pattern for \(\alpha=1\){, with the colorbar for the equilibrium \(x_i^*\)}. {(c) Opinion polarization evolution for values of the \(\alpha\) parameter with the same color code.} The {other} parameters and network match those in Fig. \ref{fig:Fig1}.}
\label{fig:Fig2}
\end{figure*}

Fig. \ref{fig:Fig1} (b) demonstrates that Eq.~\eqref{eq:mean_x} exhibits qualitatively similar behavior to that observed earlier for the full system \eqref{eq:ME}, thereby confirming the ansatz introduced earlier when applied to a symmetric narrow degree distribution (shown by the histogram in the inset). In particular, {considering \(d=2\) simplifies} Eq. \eqref{eq:mean_x} to
\[
\langle \dot{x} \rangle = \left(\tilde{\beta} - \gamma\right) \langle x \rangle + \tilde{\beta}(\alpha - 1) \langle x \rangle^2 - \alpha \tilde{\beta} \langle x \rangle^3
\]
allowing a direct comparison between the linear incidence terms and the nonlinear ones. Notably, Fig. \ref{fig:Fig1}(b) reveals a different critical point where the linearization of the one-dimensional reduced system \eqref{eq:mean_x} yields \(\gamma_c = \tilde{\beta}=\beta \langle k\rangle\), in contrast to the previous result of \(\gamma_c = \beta\lambda_{A}^{\max}\). This shift in criticality is due to the mean degree being bounded above by the Perron eigenvalue \(\langle k \rangle \leq \lambda_A^{\max}\), which, as proved in SM, is a straightforward consequence of the Rayleigh quotient \cite{horn2012matrix}.
Specifically, one can observe that when \(\alpha = 0\) and \(\alpha = 1\), the supercritical (pitchfork) and transcritical normal forms are respectively obtained \cite{strogatz_nonlinear_2007}, corroborating the numerical results. Similarly, when \(0 < \alpha < 1\), the negative cubic term of the supercritical bifurcation is further reinforced by the square term. The results differ when \(\alpha > 1\): the cubic term is counterbalanced by the positivity of the quadratic term, changing the nature of the bifurcation to subcritical and leading to explosive polarization. In this case, bistability is ensured, and consequently, hysteresis in the backward path, which results in explosive depolarization. Finally, in the SM, it is shown that the nonlinear incidence terms alone are responsible for the discontinuous phase transitions.
The dynamics change significantly when considering the overestimated case \(0 < d < 1\), as illustrated in {Fig. \ref{fig:Fig1} (c)} for various values of the perception parameter \(d\), shown for both the full system \eqref{eq:ME} (main) and the analytical proxy \eqref{eq:mean_x} (inset). In this case, we focus exclusively on the state where the new opinion is present \(\langle x \rangle \neq 0\) \footnote{We have numerically shown that the fixed point \(x_i = 0 \, \forall i\) is unstable for Eq. \eqref{eq:ME}. For the analytical proxy, since Eq. \eqref{eq:mean_x} for \(0 < d < 1\) is not analytical at the origin, one can use \(V(\langle x \rangle) = \frac{1}{2} \langle x \rangle^2\) as a candidate Lyapunov function to analyze the stability of the fixed point \(\langle x \rangle = 0\). This function is positive definite. Evaluating \(\dot{V}(\langle x \rangle) = \langle x \rangle \langle \dot{x} \rangle\) gives \(\dot{V}(\langle x \rangle) = (\tilde{\beta} - \gamma) \langle x \rangle^2 - \tilde{\beta} \langle x \rangle^3 + \alpha \tilde{\beta} \langle x \rangle^{d+1} - \alpha \tilde{\beta} \langle x \rangle^{d+2}\). For \(\langle x \rangle = \epsilon > 0\) with \(\epsilon \ll 1\), \(\dot{V}(\epsilon) \approx (\tilde{\beta} - \gamma)\epsilon^2\). Hence, \(\dot{V}(\epsilon) < 0\) if \(\tilde{\beta} < \gamma\), indicating stability. This {shows a gap} between {mean-field and individual-based approaches}.}. When the new opinion is overestimated, individuals will always embrace it. However, the average amount of the new opinion decreases asymptotically with the reversion rate \(\gamma\), meaning that although some of it is retained, it is largely compensated by the population's capability of forgetting or losing interest. Interestingly, and in contrast with the underestimating case, overestimation mitigates the abrupt depolarization.

So far, we have focused on the role that asymmetric perception plays in explosive transitions in {the new} opinion states. {In the SIS model, polarization manifests as clusters of connected individuals with the new opinion \(I\) or the complementary old opinion \(S\).} Understanding the nature of transitions {between states} is crucial for understanding polarization, {as} continuous bifurcations not only prevent abrupt transitions to new opinions but also allow for predicting differential effects on individuals \cite{cross_pattern_2009}. By expanding the perturbation across the basis of adjacency matrix eigenvectors \(\delta \mathbf{x} = \sum_\nu c_\nu e^{\lambda_J^{\nu}} \boldsymbol{\Phi}^\nu\), one can observe that, in the linear regime, the system follows the trajectory shaped by the linear combination of the critical eigenvectors, those corresponding to \(\beta\lambda_A^{\nu} > \gamma \). Pattern formation theory shows that near the criticality of a continuous phase transition, the pattern indicated by the linear solution will persist in the nonlinear equilibrium state as well \cite{cross_pattern_2009}. Following this logical line, {for continuous tranistions} the shape of the pattern will resemble that of the critical eigenvectors, as long as the instability conditions {hold near the critical point.}

Why, then, can the opinion pattern polarize? Through spectral perturbation techniques (see SM), it can be shown that for an \(M\)-modular network, the eigenvectors corresponding to the \(M\) largest eigenvalues have near-zero entries across all modules except one. This implies that if such eigenvectors are critical, the final pattern will localize within only one of the network's communities \cite{siebert2020role, asllani2022symmetry}. To illustrate such a phenomenon, in {Fig. \ref{fig:Fig2}}, we consider a 3-modules network where we ensure that only the principal eigenvector \(\boldsymbol{\Phi}_A^{\text{max}}\) is critical. This choice is two-fold: first, it ensures proximity to the transition threshold, and second, according to the Perron-Frobenius theorem, only the principal eigenvector has all positive entries, a logical requirement of our model. For values of the parameter \(\alpha \leq 1\), where the transition remains continuous, the pattern closely mimics the shape of the critical eigenvector. However, as the nonlinear incidence terms dominate (\(\alpha = 1.1\)), the transition becomes discontinuous, and the green star pattern begins to deviate from the critical eigenvector. A {significant} increase in the nonlinear terms to {\(\alpha = 2\) (green triangles) not only} accentuates this deviation further{, but transforms the new opinion into a uniform pattern among individuals, resulting in an explosive opinion shift}. Notice that richer polarization dynamics is in principle possible if further eigenvectors are considered to be critical {(see SM)}.


In this letter, we proposed a minimal yet robust opinion dynamics model to explain explosive opinion {spreading and polarization}. To allow opinions to polarize, we utilize a modular network known for clustering patterns according to the underlying nodes' communities. Our primary hypothesis is that the {switching to} a new opinion, unlike disease contagion, is based on asymmetric perception, which can be modeled through nonlinear incidence. Numerical simulations verify this hypothesis, showing a transition from continuous to discontinuous as nonlinear incidence terms dominate. To understand the conditions leading to explosive polarization/depolarization, we develop a mean-field approximation based on the observation that individuals are influenced by a probability proportional to their number of connections. Assuming a symmetric and narrow degree distribution, this approximation allows us to derive a one-dimensional normal form to analytically describe the bifurcations, confirming the numerical results. This suggests that to mitigate or prevent abrupt opinion shifts, individuals' perceptions should be as proportional as possible. These findings have practical implications for marketing, political campaigns, and public health communications, where controlling opinion dynamics is crucial.

\bibliographystyle{apsrev4-2}
\bibliography{citations}

\begin{thebibliography}{44}%
\makeatletter
\providecommand \@ifxundefined [1]{%
 \@ifx{#1\undefined}
}%
\providecommand \@ifnum [1]{%
 \ifnum #1\expandafter \@firstoftwo
 \else \expandafter \@secondoftwo
 \fi
}%
\providecommand \@ifx [1]{%
 \ifx #1\expandafter \@firstoftwo
 \else \expandafter \@secondoftwo
 \fi
}%
\providecommand \natexlab [1]{#1}%
\providecommand \enquote  [1]{``#1''}%
\providecommand \bibnamefont  [1]{#1}%
\providecommand \bibfnamefont [1]{#1}%
\providecommand \citenamefont [1]{#1}%
\providecommand \href@noop [0]{\@secondoftwo}%
\providecommand \href [0]{\begingroup \@sanitize@url \@href}%
\providecommand \@href[1]{\@@startlink{#1}\@@href}%
\providecommand \@@href[1]{\endgroup#1\@@endlink}%
\providecommand \@sanitize@url [0]{\catcode `\\12\catcode `\$12\catcode
  `\&12\catcode `\#12\catcode `\^12\catcode `\_12\catcode `\%12\relax}%
\providecommand \@@startlink[1]{}%
\providecommand \@@endlink[0]{}%
\providecommand \url  [0]{\begingroup\@sanitize@url \@url }%
\providecommand \@url [1]{\endgroup\@href {#1}{\urlprefix }}%
\providecommand \urlprefix  [0]{URL }%
\providecommand \Eprint [0]{\href }%
\providecommand \doibase [0]{https://doi.org/}%
\providecommand \selectlanguage [0]{\@gobble}%
\providecommand \bibinfo  [0]{\@secondoftwo}%
\providecommand \bibfield  [0]{\@secondoftwo}%
\providecommand \translation [1]{[#1]}%
\providecommand \BibitemOpen [0]{}%
\providecommand \bibitemStop [0]{}%
\providecommand \bibitemNoStop [0]{.\EOS\space}%
\providecommand \EOS [0]{\spacefactor3000\relax}%
\providecommand \BibitemShut  [1]{\csname bibitem#1\endcsname}%
\let\auto@bib@innerbib\@empty
\bibitem [{\citenamefont {Jackson}\ \emph {et~al.}(2008)\citenamefont {Jackson}
  \emph {et~al.}}]{jackson2008social}%
  \BibitemOpen
  \bibfield  {author} {\bibinfo {author} {\bibfnamefont {M.~O.}\ \bibnamefont
  {Jackson}} \emph {et~al.},\ }\href@noop {} {\emph {\bibinfo {title} {Social
  and economic networks}}},\ Vol.~\bibinfo {volume} {3}\ (\bibinfo  {publisher}
  {Princeton university press Princeton},\ \bibinfo {year} {2008})\BibitemShut
  {NoStop}%
\bibitem [{\citenamefont {Lazer}\ \emph {et~al.}(2009)\citenamefont {Lazer},
  \citenamefont {Pentland}, \citenamefont {Adamic}, \citenamefont {Aral},
  \citenamefont {Barab{\'a}si}, \citenamefont {Brewer}, \citenamefont
  {Christakis}, \citenamefont {Contractor}, \citenamefont {Fowler},
  \citenamefont {Gutmann} \emph {et~al.}}]{lazer2009computational}%
  \BibitemOpen
  \bibfield  {author} {\bibinfo {author} {\bibfnamefont {D.}~\bibnamefont
  {Lazer}}, \bibinfo {author} {\bibfnamefont {A.}~\bibnamefont {Pentland}},
  \bibinfo {author} {\bibfnamefont {L.}~\bibnamefont {Adamic}}, \bibinfo
  {author} {\bibfnamefont {S.}~\bibnamefont {Aral}}, \bibinfo {author}
  {\bibfnamefont {A.-L.}\ \bibnamefont {Barab{\'a}si}}, \bibinfo {author}
  {\bibfnamefont {D.}~\bibnamefont {Brewer}}, \bibinfo {author} {\bibfnamefont
  {N.}~\bibnamefont {Christakis}}, \bibinfo {author} {\bibfnamefont
  {N.}~\bibnamefont {Contractor}}, \bibinfo {author} {\bibfnamefont
  {J.}~\bibnamefont {Fowler}}, \bibinfo {author} {\bibfnamefont
  {M.}~\bibnamefont {Gutmann}}, \emph {et~al.},\ }\href@noop {} {\bibfield
  {journal} {\bibinfo  {journal} {Science}\ }\textbf {\bibinfo {volume}
  {323}},\ \bibinfo {pages} {721} (\bibinfo {year} {2009})}\BibitemShut
  {NoStop}%
\bibitem [{\citenamefont {Centola}(2010)}]{centola_spread_2010}%
  \BibitemOpen
  \bibfield  {author} {\bibinfo {author} {\bibfnamefont {D.}~\bibnamefont
  {Centola}},\ }\href@noop {} {\bibfield  {journal} {\bibinfo  {journal}
  {Science}\ }\textbf {\bibinfo {volume} {329}},\ \bibinfo {pages} {1194}
  (\bibinfo {year} {2010})}\BibitemShut {NoStop}%
\bibitem [{\citenamefont {Cinelli}\ \emph {et~al.}(2021)\citenamefont
  {Cinelli}, \citenamefont {De~Francisci~Morales}, \citenamefont {Galeazzi},
  \citenamefont {Quattrociocchi},\ and\ \citenamefont
  {Starnini}}]{cinelli2021echo}%
  \BibitemOpen
  \bibfield  {author} {\bibinfo {author} {\bibfnamefont {M.}~\bibnamefont
  {Cinelli}}, \bibinfo {author} {\bibfnamefont {G.}~\bibnamefont
  {De~Francisci~Morales}}, \bibinfo {author} {\bibfnamefont {A.}~\bibnamefont
  {Galeazzi}}, \bibinfo {author} {\bibfnamefont {W.}~\bibnamefont
  {Quattrociocchi}},\ and\ \bibinfo {author} {\bibfnamefont {M.}~\bibnamefont
  {Starnini}},\ }\href@noop {} {\bibfield  {journal} {\bibinfo  {journal}
  {Proceedings of the National Academy of Sciences}\ }\textbf {\bibinfo
  {volume} {118}},\ \bibinfo {pages} {e2023301118} (\bibinfo {year}
  {2021})}\BibitemShut {NoStop}%
\bibitem [{\citenamefont {Castellano}\ \emph {et~al.}(2009)\citenamefont
  {Castellano}, \citenamefont {Fortunato},\ and\ \citenamefont
  {Loreto}}]{castellano_statistical_2009}%
  \BibitemOpen
  \bibfield  {author} {\bibinfo {author} {\bibfnamefont {C.}~\bibnamefont
  {Castellano}}, \bibinfo {author} {\bibfnamefont {S.}~\bibnamefont
  {Fortunato}},\ and\ \bibinfo {author} {\bibfnamefont {V.}~\bibnamefont
  {Loreto}},\ }\href@noop {} {\bibfield  {journal} {\bibinfo  {journal}
  {Reviews of Modern Physics}\ }\textbf {\bibinfo {volume} {81}},\ \bibinfo
  {pages} {591} (\bibinfo {year} {2009})}\BibitemShut {NoStop}%
\bibitem [{\citenamefont {Easley}\ and\ \citenamefont
  {Kleinberg}(2010)}]{easley2010networks}%
  \BibitemOpen
  \bibfield  {author} {\bibinfo {author} {\bibfnamefont {D.}~\bibnamefont
  {Easley}}\ and\ \bibinfo {author} {\bibfnamefont {J.}~\bibnamefont
  {Kleinberg}},\ }\href@noop {} {\emph {\bibinfo {title} {Networks, crowds, and
  markets: Reasoning about a highly connected world}}},\ Vol.~\bibinfo {volume}
  {1}\ (\bibinfo  {publisher} {Cambridge university press Cambridge},\ \bibinfo
  {year} {2010})\BibitemShut {NoStop}%
\bibitem [{\citenamefont {Newman}(2018)}]{newman_networks_2018}%
  \BibitemOpen
  \bibfield  {author} {\bibinfo {author} {\bibfnamefont {M.~E.~J.}\
  \bibnamefont {Newman}},\ }\href@noop {} {\emph {\bibinfo {title}
  {Networks}}},\ \bibinfo {edition} {second edition}\ ed.\ (\bibinfo
  {publisher} {Oxford University Press},\ \bibinfo {address} {Oxford, United
  Kingdom ; New York, NY, United States of America},\ \bibinfo {year}
  {2018})\BibitemShut {NoStop}%
\bibitem [{\citenamefont {Axelrod}(1997)}]{axelrod1997dissemination}%
  \BibitemOpen
  \bibfield  {author} {\bibinfo {author} {\bibfnamefont {R.}~\bibnamefont
  {Axelrod}},\ }\href@noop {} {\bibfield  {journal} {\bibinfo  {journal}
  {Journal of conflict resolution}\ }\textbf {\bibinfo {volume} {41}},\
  \bibinfo {pages} {203} (\bibinfo {year} {1997})}\BibitemShut {NoStop}%
\bibitem [{\citenamefont {Nowak}\ \emph {et~al.}(2002)\citenamefont {Nowak},
  \citenamefont {Komarova},\ and\ \citenamefont
  {Niyogi}}]{nowak2002computational}%
  \BibitemOpen
  \bibfield  {author} {\bibinfo {author} {\bibfnamefont {M.~A.}\ \bibnamefont
  {Nowak}}, \bibinfo {author} {\bibfnamefont {N.~L.}\ \bibnamefont
  {Komarova}},\ and\ \bibinfo {author} {\bibfnamefont {P.}~\bibnamefont
  {Niyogi}},\ }\href@noop {} {\bibfield  {journal} {\bibinfo  {journal}
  {Nature}\ }\textbf {\bibinfo {volume} {417}},\ \bibinfo {pages} {611}
  (\bibinfo {year} {2002})}\BibitemShut {NoStop}%
\bibitem [{\citenamefont {Helbing}\ \emph {et~al.}(2000)\citenamefont
  {Helbing}, \citenamefont {Farkas},\ and\ \citenamefont
  {Vicsek}}]{helbing2000simulating}%
  \BibitemOpen
  \bibfield  {author} {\bibinfo {author} {\bibfnamefont {D.}~\bibnamefont
  {Helbing}}, \bibinfo {author} {\bibfnamefont {I.}~\bibnamefont {Farkas}},\
  and\ \bibinfo {author} {\bibfnamefont {T.}~\bibnamefont {Vicsek}},\
  }\href@noop {} {\bibfield  {journal} {\bibinfo  {journal} {Nature}\ }\textbf
  {\bibinfo {volume} {407}},\ \bibinfo {pages} {487} (\bibinfo {year}
  {2000})}\BibitemShut {NoStop}%
\bibitem [{\citenamefont {Bail}\ \emph {et~al.}(2018)\citenamefont {Bail},
  \citenamefont {Argyle}, \citenamefont {Brown}, \citenamefont {Bumpus},
  \citenamefont {Chen}, \citenamefont {Hunzaker}, \citenamefont {Lee},
  \citenamefont {Mann}, \citenamefont {Merhout},\ and\ \citenamefont
  {Volfovsky}}]{bail2018exposure}%
  \BibitemOpen
  \bibfield  {author} {\bibinfo {author} {\bibfnamefont {C.~A.}\ \bibnamefont
  {Bail}}, \bibinfo {author} {\bibfnamefont {L.~P.}\ \bibnamefont {Argyle}},
  \bibinfo {author} {\bibfnamefont {T.~W.}\ \bibnamefont {Brown}}, \bibinfo
  {author} {\bibfnamefont {J.~P.}\ \bibnamefont {Bumpus}}, \bibinfo {author}
  {\bibfnamefont {H.}~\bibnamefont {Chen}}, \bibinfo {author} {\bibfnamefont
  {M.~B.}\ \bibnamefont {Hunzaker}}, \bibinfo {author} {\bibfnamefont
  {J.}~\bibnamefont {Lee}}, \bibinfo {author} {\bibfnamefont {M.}~\bibnamefont
  {Mann}}, \bibinfo {author} {\bibfnamefont {F.}~\bibnamefont {Merhout}},\ and\
  \bibinfo {author} {\bibfnamefont {A.}~\bibnamefont {Volfovsky}},\ }\href@noop
  {} {\bibfield  {journal} {\bibinfo  {journal} {Proceedings of the National
  Academy of Sciences}\ }\textbf {\bibinfo {volume} {115}},\ \bibinfo {pages}
  {9216} (\bibinfo {year} {2018})}\BibitemShut {NoStop}%
\bibitem [{\citenamefont {Baumann}\ \emph {et~al.}(2020)\citenamefont
  {Baumann}, \citenamefont {Lorenz-Spreen}, \citenamefont {Sokolov},\ and\
  \citenamefont {Starnini}}]{baumann2020modeling}%
  \BibitemOpen
  \bibfield  {author} {\bibinfo {author} {\bibfnamefont {F.}~\bibnamefont
  {Baumann}}, \bibinfo {author} {\bibfnamefont {P.}~\bibnamefont
  {Lorenz-Spreen}}, \bibinfo {author} {\bibfnamefont {I.~M.}\ \bibnamefont
  {Sokolov}},\ and\ \bibinfo {author} {\bibfnamefont {M.}~\bibnamefont
  {Starnini}},\ }\href@noop {} {\bibfield  {journal} {\bibinfo  {journal}
  {Physical Review Letters}\ }\textbf {\bibinfo {volume} {124}},\ \bibinfo
  {pages} {048301} (\bibinfo {year} {2020})}\BibitemShut {NoStop}%
\bibitem [{\citenamefont {Baumann}\ \emph {et~al.}(2021)\citenamefont
  {Baumann}, \citenamefont {Lorenz-Spreen}, \citenamefont {Sokolov},\ and\
  \citenamefont {Starnini}}]{baumann2021emergence}%
  \BibitemOpen
  \bibfield  {author} {\bibinfo {author} {\bibfnamefont {F.}~\bibnamefont
  {Baumann}}, \bibinfo {author} {\bibfnamefont {P.}~\bibnamefont
  {Lorenz-Spreen}}, \bibinfo {author} {\bibfnamefont {I.~M.}\ \bibnamefont
  {Sokolov}},\ and\ \bibinfo {author} {\bibfnamefont {M.}~\bibnamefont
  {Starnini}},\ }\href@noop {} {\bibfield  {journal} {\bibinfo  {journal}
  {Physical Review X}\ }\textbf {\bibinfo {volume} {11}},\ \bibinfo {pages}
  {011012} (\bibinfo {year} {2021})}\BibitemShut {NoStop}%
\bibitem [{\citenamefont {Ojer}\ \emph {et~al.}(2023)\citenamefont {Ojer},
  \citenamefont {Starnini},\ and\ \citenamefont
  {Pastor-Satorras}}]{ojer_modeling_2023}%
  \BibitemOpen
  \bibfield  {author} {\bibinfo {author} {\bibfnamefont {J.}~\bibnamefont
  {Ojer}}, \bibinfo {author} {\bibfnamefont {M.}~\bibnamefont {Starnini}},\
  and\ \bibinfo {author} {\bibfnamefont {R.}~\bibnamefont {Pastor-Satorras}},\
  }\href@noop {} {\bibfield  {journal} {\bibinfo  {journal} {Physical Review
  Letters}\ }\textbf {\bibinfo {volume} {130}},\ \bibinfo {pages} {207401}
  (\bibinfo {year} {2023})}\BibitemShut {NoStop}%
\bibitem [{\citenamefont {Balietti}\ \emph {et~al.}(2021)\citenamefont
  {Balietti}, \citenamefont {Getoor}, \citenamefont {Goldstein},\ and\
  \citenamefont {Watts}}]{balietti2021reducing}%
  \BibitemOpen
  \bibfield  {author} {\bibinfo {author} {\bibfnamefont {S.}~\bibnamefont
  {Balietti}}, \bibinfo {author} {\bibfnamefont {L.}~\bibnamefont {Getoor}},
  \bibinfo {author} {\bibfnamefont {D.~G.}\ \bibnamefont {Goldstein}},\ and\
  \bibinfo {author} {\bibfnamefont {D.~J.}\ \bibnamefont {Watts}},\ }\href@noop
  {} {\bibfield  {journal} {\bibinfo  {journal} {Proceedings of the National
  Academy of Sciences}\ }\textbf {\bibinfo {volume} {118}},\ \bibinfo {pages}
  {e2112552118} (\bibinfo {year} {2021})}\BibitemShut {NoStop}%
\bibitem [{\citenamefont {Sornette}(2003)}]{sornette2003crash}%
  \BibitemOpen
  \bibfield  {author} {\bibinfo {author} {\bibfnamefont {D.}~\bibnamefont
  {Sornette}},\ }\href@noop {} {\emph {\bibinfo {title} {Why stock markets
  crash: Critical events in complex financial systems}}}\ (\bibinfo
  {publisher} {Princeton University Press},\ \bibinfo {year}
  {2003})\BibitemShut {NoStop}%
\bibitem [{\citenamefont
  {Granovetter}(1978{\natexlab{a}})}]{granovetter1978threshold}%
  \BibitemOpen
  \bibfield  {author} {\bibinfo {author} {\bibfnamefont {M.}~\bibnamefont
  {Granovetter}},\ }\href@noop {} {\bibfield  {journal} {\bibinfo  {journal}
  {American Journal of Sociology}\ }\textbf {\bibinfo {volume} {83}},\ \bibinfo
  {pages} {1420} (\bibinfo {year} {1978}{\natexlab{a}})}\BibitemShut {NoStop}%
\bibitem [{\citenamefont {Watts}(2002)}]{watts2002simple}%
  \BibitemOpen
  \bibfield  {author} {\bibinfo {author} {\bibfnamefont {D.~J.}\ \bibnamefont
  {Watts}},\ }\href@noop {} {\bibfield  {journal} {\bibinfo  {journal}
  {Proceedings of the National Academy of Sciences}\ }\textbf {\bibinfo
  {volume} {99}},\ \bibinfo {pages} {5766} (\bibinfo {year}
  {2002})}\BibitemShut {NoStop}%
\bibitem [{\citenamefont {Milli}(2021)}]{milli2021opinion}%
  \BibitemOpen
  \bibfield  {author} {\bibinfo {author} {\bibfnamefont {L.}~\bibnamefont
  {Milli}},\ }\href@noop {} {\bibfield  {journal} {\bibinfo  {journal} {Applied
  Network Science}\ }\textbf {\bibinfo {volume} {6}},\ \bibinfo {pages} {76}
  (\bibinfo {year} {2021})}\BibitemShut {NoStop}%
\bibitem [{\citenamefont {Liu}\ \emph {et~al.}(1986)\citenamefont {Liu},
  \citenamefont {Levin},\ and\ \citenamefont {Iwasa}}]{liu_influence_1986}%
  \BibitemOpen
  \bibfield  {author} {\bibinfo {author} {\bibfnamefont {W.-m.}\ \bibnamefont
  {Liu}}, \bibinfo {author} {\bibfnamefont {S.~A.}\ \bibnamefont {Levin}},\
  and\ \bibinfo {author} {\bibfnamefont {Y.}~\bibnamefont {Iwasa}},\
  }\href@noop {} {\bibfield  {journal} {\bibinfo  {journal} {Journal of
  Mathematical Biology}\ }\textbf {\bibinfo {volume} {23}},\ \bibinfo {pages}
  {187} (\bibinfo {year} {1986})}\BibitemShut {NoStop}%
\bibitem [{\citenamefont {Liu}\ \emph {et~al.}(1987)\citenamefont {Liu},
  \citenamefont {Hethcote},\ and\ \citenamefont {Levin}}]{liu_dynamical_1987}%
  \BibitemOpen
  \bibfield  {author} {\bibinfo {author} {\bibfnamefont {W.-m.}\ \bibnamefont
  {Liu}}, \bibinfo {author} {\bibfnamefont {H.~W.}\ \bibnamefont {Hethcote}},\
  and\ \bibinfo {author} {\bibfnamefont {S.~A.}\ \bibnamefont {Levin}},\
  }\href@noop {} {\bibfield  {journal} {\bibinfo  {journal} {Journal of
  Mathematical Biology}\ }\textbf {\bibinfo {volume} {25}},\ \bibinfo {pages}
  {359} (\bibinfo {year} {1987})}\BibitemShut {NoStop}%
\bibitem [{\citenamefont {Carletti}\ \emph {et~al.}(2020)\citenamefont
  {Carletti}, \citenamefont {Asllani}, \citenamefont {Fanelli},\ and\
  \citenamefont {Latora}}]{carletti2020nonlinear}%
  \BibitemOpen
  \bibfield  {author} {\bibinfo {author} {\bibfnamefont {T.}~\bibnamefont
  {Carletti}}, \bibinfo {author} {\bibfnamefont {M.}~\bibnamefont {Asllani}},
  \bibinfo {author} {\bibfnamefont {D.}~\bibnamefont {Fanelli}},\ and\ \bibinfo
  {author} {\bibfnamefont {V.}~\bibnamefont {Latora}},\ }\href@noop {}
  {\bibfield  {journal} {\bibinfo  {journal} {Physical Review Research}\
  }\textbf {\bibinfo {volume} {2}},\ \bibinfo {pages} {033012} (\bibinfo {year}
  {2020})}\BibitemShut {NoStop}%
\bibitem [{\citenamefont {Siebert}\ \emph {et~al.}(2022)\citenamefont
  {Siebert}, \citenamefont {Gleeson},\ and\ \citenamefont
  {Asllani}}]{siebert2022nonlinear}%
  \BibitemOpen
  \bibfield  {author} {\bibinfo {author} {\bibfnamefont {B.~A.}\ \bibnamefont
  {Siebert}}, \bibinfo {author} {\bibfnamefont {J.~P.}\ \bibnamefont
  {Gleeson}},\ and\ \bibinfo {author} {\bibfnamefont {M.}~\bibnamefont
  {Asllani}},\ }\href@noop {} {\bibfield  {journal} {\bibinfo  {journal}
  {Chaos, Solitons \& Fractals}\ }\textbf {\bibinfo {volume} {161}},\ \bibinfo
  {pages} {112322} (\bibinfo {year} {2022})}\BibitemShut {NoStop}%
\bibitem [{\citenamefont {de~Kemmeter}\ \emph {et~al.}(2024)\citenamefont
  {de~Kemmeter}, \citenamefont {Byrne}, \citenamefont {Dunne}, \citenamefont
  {Carletti},\ and\ \citenamefont {Asllani}}]{de2024emergence}%
  \BibitemOpen
  \bibfield  {author} {\bibinfo {author} {\bibfnamefont {J.-F.}\ \bibnamefont
  {de~Kemmeter}}, \bibinfo {author} {\bibfnamefont {A.}~\bibnamefont {Byrne}},
  \bibinfo {author} {\bibfnamefont {A.}~\bibnamefont {Dunne}}, \bibinfo
  {author} {\bibfnamefont {T.}~\bibnamefont {Carletti}},\ and\ \bibinfo
  {author} {\bibfnamefont {M.}~\bibnamefont {Asllani}},\ }\href@noop {}
  {\bibfield  {journal} {\bibinfo  {journal} {Physical Review E}\ }\textbf
  {\bibinfo {volume} {110}},\ \bibinfo {pages} {L012201} (\bibinfo {year}
  {2024})}\BibitemShut {NoStop}%
\bibitem [{\citenamefont {Cross}\ and\ \citenamefont
  {Greenside}(2009)}]{cross_pattern_2009}%
  \BibitemOpen
  \bibfield  {author} {\bibinfo {author} {\bibfnamefont {M.}~\bibnamefont
  {Cross}}\ and\ \bibinfo {author} {\bibfnamefont {H.}~\bibnamefont
  {Greenside}},\ }\href@noop {} {\emph {\bibinfo {title} {Pattern formation and
  dynamics in nonequilibrium systems}}}\ (\bibinfo  {publisher} {Cambridge
  University Press},\ \bibinfo {address} {Cambridge, UK ; New York},\ \bibinfo
  {year} {2009})\BibitemShut {NoStop}%
\bibitem [{\citenamefont {Dandekar}\ \emph {et~al.}(2013)\citenamefont
  {Dandekar}, \citenamefont {Goel},\ and\ \citenamefont
  {Lee}}]{dandekar2013biased}%
  \BibitemOpen
  \bibfield  {author} {\bibinfo {author} {\bibfnamefont {P.}~\bibnamefont
  {Dandekar}}, \bibinfo {author} {\bibfnamefont {A.}~\bibnamefont {Goel}},\
  and\ \bibinfo {author} {\bibfnamefont {D.~T.}\ \bibnamefont {Lee}},\
  }\href@noop {} {\bibfield  {journal} {\bibinfo  {journal} {Proceedings of the
  National Academy of Sciences}\ }\textbf {\bibinfo {volume} {110}},\ \bibinfo
  {pages} {5791} (\bibinfo {year} {2013})}\BibitemShut {NoStop}%
\bibitem [{\citenamefont {Deffuant}\ \emph {et~al.}(2000)\citenamefont
  {Deffuant}, \citenamefont {Neau}, \citenamefont {Amblard},\ and\
  \citenamefont {Weisbuch}}]{deffuant2000mixing}%
  \BibitemOpen
  \bibfield  {author} {\bibinfo {author} {\bibfnamefont {G.}~\bibnamefont
  {Deffuant}}, \bibinfo {author} {\bibfnamefont {D.}~\bibnamefont {Neau}},
  \bibinfo {author} {\bibfnamefont {F.}~\bibnamefont {Amblard}},\ and\ \bibinfo
  {author} {\bibfnamefont {G.}~\bibnamefont {Weisbuch}},\ }\href@noop {}
  {\bibfield  {journal} {\bibinfo  {journal} {Advances in Complex Systems}\
  }\textbf {\bibinfo {volume} {3}},\ \bibinfo {pages} {87} (\bibinfo {year}
  {2000})}\BibitemShut {NoStop}%
\bibitem [{\citenamefont {Rainer}\ and\ \citenamefont
  {Krause}(2002)}]{Rainer2002-RAIODA}%
  \BibitemOpen
  \bibfield  {author} {\bibinfo {author} {\bibfnamefont {H.}~\bibnamefont
  {Rainer}}\ and\ \bibinfo {author} {\bibfnamefont {U.}~\bibnamefont
  {Krause}},\ }\href@noop {} {\bibfield  {journal} {\bibinfo  {journal}
  {Journal of Artificial Societies and Social Simulation}\ }\textbf {\bibinfo
  {volume} {5}} (\bibinfo {year} {2002})}\BibitemShut {NoStop}%
\bibitem [{\citenamefont {Kuramoto}(1975)}]{kuramoto1975self}%
  \BibitemOpen
  \bibfield  {author} {\bibinfo {author} {\bibfnamefont {Y.}~\bibnamefont
  {Kuramoto}},\ }in\ \href@noop {} {\emph {\bibinfo {booktitle} {International
  Symposium on Mathematical Problems in Theoretical Physics: January 23--29,
  1975, Kyoto University, Kyoto/Japan}}}\ (\bibinfo {organization} {Springer},\
  \bibinfo {year} {1975})\ pp.\ \bibinfo {pages} {420--422}\BibitemShut
  {NoStop}%
\bibitem [{\citenamefont {Strogatz}(2007)}]{strogatz_nonlinear_2007}%
  \BibitemOpen
  \bibfield  {author} {\bibinfo {author} {\bibfnamefont {S.}~\bibnamefont
  {Strogatz}},\ }\href@noop {} {\emph {\bibinfo {title} {Nonlinear dynamics and
  chaos: with applications to physics, biology, chemistry, and engineering}}},\
  \bibinfo {edition} {repr.}\ ed.,\ Studies in nonlinearity\ (\bibinfo
  {publisher} {Westview Press},\ \bibinfo {address} {Cambridge, Mass},\
  \bibinfo {year} {2007})\BibitemShut {NoStop}%
\bibitem [{\citenamefont {Kiss}\ \emph {et~al.}(2017)\citenamefont {Kiss},
  \citenamefont {Miller},\ and\ \citenamefont {Simon}}]{kiss_mathematics_2017}%
  \BibitemOpen
  \bibfield  {author} {\bibinfo {author} {\bibfnamefont {I.~Z.}\ \bibnamefont
  {Kiss}}, \bibinfo {author} {\bibfnamefont {J.~C.}\ \bibnamefont {Miller}},\
  and\ \bibinfo {author} {\bibfnamefont {P.~L.}\ \bibnamefont {Simon}},\
  }\href@noop {} {\emph {\bibinfo {title} {Mathematics of {Epidemics} on
  {Networks}: {From} {Exact} to {Approximate} {Models}}}},\ \bibinfo {series}
  {Interdisciplinary {Applied} {Mathematics}}, Vol.~\bibinfo {volume} {46}\
  (\bibinfo  {publisher} {Springer International Publishing},\ \bibinfo
  {address} {Cham},\ \bibinfo {year} {2017})\BibitemShut {NoStop}%
\bibitem [{\citenamefont {Nekovee}\ \emph {et~al.}(2007)\citenamefont
  {Nekovee}, \citenamefont {Moreno}, \citenamefont {Bianconi},\ and\
  \citenamefont {Marsili}}]{nekovee_theory_2007}%
  \BibitemOpen
  \bibfield  {author} {\bibinfo {author} {\bibfnamefont {M.}~\bibnamefont
  {Nekovee}}, \bibinfo {author} {\bibfnamefont {Y.}~\bibnamefont {Moreno}},
  \bibinfo {author} {\bibfnamefont {G.}~\bibnamefont {Bianconi}},\ and\
  \bibinfo {author} {\bibfnamefont {M.}~\bibnamefont {Marsili}},\ }\href@noop
  {} {\bibfield  {journal} {\bibinfo  {journal} {Physica A: Statistical
  Mechanics and its Applications}\ }\textbf {\bibinfo {volume} {374}},\
  \bibinfo {pages} {457} (\bibinfo {year} {2007})}\BibitemShut {NoStop}%
\bibitem [{\citenamefont
  {Granovetter}(1978{\natexlab{b}})}]{granovetter_threshold_1978}%
  \BibitemOpen
  \bibfield  {author} {\bibinfo {author} {\bibfnamefont {M.}~\bibnamefont
  {Granovetter}},\ }\href@noop {} {\bibfield  {journal} {\bibinfo  {journal}
  {American Journal of Sociology}\ }\textbf {\bibinfo {volume} {83}},\ \bibinfo
  {pages} {1420} (\bibinfo {year} {1978}{\natexlab{b}})}\BibitemShut {NoStop}%
\bibitem [{\citenamefont {Horn}\ and\ \citenamefont
  {Jackson}(1972)}]{horn1972general}%
  \BibitemOpen
  \bibfield  {author} {\bibinfo {author} {\bibfnamefont {F.}~\bibnamefont
  {Horn}}\ and\ \bibinfo {author} {\bibfnamefont {R.}~\bibnamefont {Jackson}},\
  }\href@noop {} {\bibfield  {journal} {\bibinfo  {journal} {Archive for
  rational mechanics and analysis}\ }\textbf {\bibinfo {volume} {47}},\
  \bibinfo {pages} {81} (\bibinfo {year} {1972})}\BibitemShut {NoStop}%
\bibitem [{\citenamefont {Pastor-Satorras}\ and\ \citenamefont
  {Vespignani}(2001{\natexlab{a}})}]{pastor2001epidemic_1}%
  \BibitemOpen
  \bibfield  {author} {\bibinfo {author} {\bibfnamefont {R.}~\bibnamefont
  {Pastor-Satorras}}\ and\ \bibinfo {author} {\bibfnamefont {A.}~\bibnamefont
  {Vespignani}},\ }\href@noop {} {\bibfield  {journal} {\bibinfo  {journal}
  {Physical Review E}\ }\textbf {\bibinfo {volume} {63}},\ \bibinfo {pages}
  {066117} (\bibinfo {year} {2001}{\natexlab{a}})}\BibitemShut {NoStop}%
\bibitem [{\citenamefont {Pastor-Satorras}\ and\ \citenamefont
  {Vespignani}(2001{\natexlab{b}})}]{pastor2001epidemic_2}%
  \BibitemOpen
  \bibfield  {author} {\bibinfo {author} {\bibfnamefont {R.}~\bibnamefont
  {Pastor-Satorras}}\ and\ \bibinfo {author} {\bibfnamefont {A.}~\bibnamefont
  {Vespignani}},\ }\href@noop {} {\bibfield  {journal} {\bibinfo  {journal}
  {Physical Review Letters}\ }\textbf {\bibinfo {volume} {86}},\ \bibinfo
  {pages} {3200} (\bibinfo {year} {2001}{\natexlab{b}})}\BibitemShut {NoStop}%
\bibitem [{\citenamefont {Barth{\'e}lemy}\ \emph {et~al.}(2004)\citenamefont
  {Barth{\'e}lemy}, \citenamefont {Barrat}, \citenamefont {Pastor-Satorras},\
  and\ \citenamefont {Vespignani}}]{barthelemy2004velocity}%
  \BibitemOpen
  \bibfield  {author} {\bibinfo {author} {\bibfnamefont {M.}~\bibnamefont
  {Barth{\'e}lemy}}, \bibinfo {author} {\bibfnamefont {A.}~\bibnamefont
  {Barrat}}, \bibinfo {author} {\bibfnamefont {R.}~\bibnamefont
  {Pastor-Satorras}},\ and\ \bibinfo {author} {\bibfnamefont {A.}~\bibnamefont
  {Vespignani}},\ }\href@noop {} {\bibfield  {journal} {\bibinfo  {journal}
  {Physical Review Letters}\ }\textbf {\bibinfo {volume} {92}},\ \bibinfo
  {pages} {178701} (\bibinfo {year} {2004})}\BibitemShut {NoStop}%
\bibitem [{\citenamefont {Barth{\'e}lemy}\ \emph {et~al.}(2005)\citenamefont
  {Barth{\'e}lemy}, \citenamefont {Barrat}, \citenamefont {Pastor-Satorras},\
  and\ \citenamefont {Vespignani}}]{barthelemy2005dynamical}%
  \BibitemOpen
  \bibfield  {author} {\bibinfo {author} {\bibfnamefont {M.}~\bibnamefont
  {Barth{\'e}lemy}}, \bibinfo {author} {\bibfnamefont {A.}~\bibnamefont
  {Barrat}}, \bibinfo {author} {\bibfnamefont {R.}~\bibnamefont
  {Pastor-Satorras}},\ and\ \bibinfo {author} {\bibfnamefont {A.}~\bibnamefont
  {Vespignani}},\ }\href@noop {} {\bibfield  {journal} {\bibinfo  {journal}
  {Journal of Theoretical Biology}\ }\textbf {\bibinfo {volume} {235}},\
  \bibinfo {pages} {275} (\bibinfo {year} {2005})}\BibitemShut {NoStop}%
\bibitem [{\citenamefont {Horn}\ and\ \citenamefont
  {Johnson}(2012)}]{horn2012matrix}%
  \BibitemOpen
  \bibfield  {author} {\bibinfo {author} {\bibfnamefont {R.~A.}\ \bibnamefont
  {Horn}}\ and\ \bibinfo {author} {\bibfnamefont {C.~R.}\ \bibnamefont
  {Johnson}},\ }\href@noop {} {\emph {\bibinfo {title} {Matrix analysis}}}\
  (\bibinfo  {publisher} {Cambridge university press},\ \bibinfo {year}
  {2012})\BibitemShut {NoStop}%
\bibitem [{Note1()}]{Note1}%
  \BibitemOpen
  \bibinfo {note} {We have numerically shown that the fixed point \(x_i = 0
  \protect \, \forall i\) is unstable for Eq. \protect \textup {\hbox
  {\mathsurround \z@ \protect \normalfont (\ignorespaces \ref {eq:ME}\unskip
  \@@italiccorr )}}. For the analytical proxy, since Eq. \protect \textup
  {\hbox {\mathsurround \z@ \protect \normalfont (\ignorespaces \ref
  {eq:mean_x}\unskip \@@italiccorr )}} for \(0 < d < 1\) is not analytical at
  the origin, one can use \(V(\langle x \rangle ) = \protect \frac {1}{2}
  \langle x \rangle ^2\) as a candidate Lyapunov function to analyze the
  stability of the fixed point \(\langle x \rangle = 0\). This function is
  positive definite. Evaluating \(\protect \dot {V}(\langle x \rangle ) =
  \langle x \rangle \langle \protect \dot {x} \rangle \) gives \(\protect \dot
  {V}(\langle x \rangle ) = (\protect \tilde {\beta } - \gamma ) \langle x
  \rangle ^2 - \protect \tilde {\beta } \langle x \rangle ^3 + \alpha \protect
  \tilde {\beta } \langle x \rangle ^{d+1} - \alpha \protect \tilde {\beta }
  \langle x \rangle ^{d+2}\). For \(\langle x \rangle = \epsilon > 0\) with
  \(\epsilon \ll 1\), \(\protect \dot {V}(\epsilon ) \approx (\protect \tilde
  {\beta } - \gamma )\epsilon ^2\). Hence, \(\protect \dot {V}(\epsilon ) < 0\)
  if \(\protect \tilde {\beta } < \gamma \), indicating stability. This {shows
  a gap} between {mean-field and individual-based approaches}.}\BibitemShut
  {Stop}%
\bibitem [{\citenamefont {Siebert}\ \emph {et~al.}(2020)\citenamefont
  {Siebert}, \citenamefont {Hall}, \citenamefont {Gleeson},\ and\ \citenamefont
  {Asllani}}]{siebert2020role}%
  \BibitemOpen
  \bibfield  {author} {\bibinfo {author} {\bibfnamefont {B.~A.}\ \bibnamefont
  {Siebert}}, \bibinfo {author} {\bibfnamefont {C.~L.}\ \bibnamefont {Hall}},
  \bibinfo {author} {\bibfnamefont {J.~P.}\ \bibnamefont {Gleeson}},\ and\
  \bibinfo {author} {\bibfnamefont {M.}~\bibnamefont {Asllani}},\ }\href@noop
  {} {\bibfield  {journal} {\bibinfo  {journal} {Physical Review E}\ }\textbf
  {\bibinfo {volume} {102}},\ \bibinfo {pages} {052306} (\bibinfo {year}
  {2020})}\BibitemShut {NoStop}%
\bibitem [{\citenamefont {Asllani}\ \emph {et~al.}(2022)\citenamefont
  {Asllani}, \citenamefont {Siebert}, \citenamefont {Arenas},\ and\
  \citenamefont {Gleeson}}]{asllani2022symmetry}%
  \BibitemOpen
  \bibfield  {author} {\bibinfo {author} {\bibfnamefont {M.}~\bibnamefont
  {Asllani}}, \bibinfo {author} {\bibfnamefont {B.~A.}\ \bibnamefont
  {Siebert}}, \bibinfo {author} {\bibfnamefont {A.}~\bibnamefont {Arenas}},\
  and\ \bibinfo {author} {\bibfnamefont {J.~P.}\ \bibnamefont {Gleeson}},\
  }\href@noop {} {\bibfield  {journal} {\bibinfo  {journal} {Chaos: An
  Interdisciplinary Journal of Nonlinear Science}\ }\textbf {\bibinfo {volume}
  {32}} (\bibinfo {year} {2022})}\BibitemShut {NoStop}%
\bibitem [{\citenamefont {Watts}\ and\ \citenamefont
  {Strogatz}(1998)}]{watts1998collective}%
  \BibitemOpen
  \bibfield  {author} {\bibinfo {author} {\bibfnamefont {D.~J.}\ \bibnamefont
  {Watts}}\ and\ \bibinfo {author} {\bibfnamefont {S.~H.}\ \bibnamefont
  {Strogatz}},\ }\href@noop {} {\bibfield  {journal} {\bibinfo  {journal}
  {Nature}\ }\textbf {\bibinfo {volume} {393}},\ \bibinfo {pages} {440}
  (\bibinfo {year} {1998})}\BibitemShut {NoStop}%
\bibitem [{Note2()}]{Note2}%
  \BibitemOpen
  \bibinfo {note} {The reason for referring to Eq. \protect \textup {\hbox
  {\mathsurround \z@ \protect \normalfont (\ignorespaces \ref {eq:1}\unskip
  \@@italiccorr )}} as a \protect \textit {Mean-Field} is that it represents a
  mean-field approximation, which, in principle, should be derived from the
  particle dynamics described by the reaction through an averaging process over
  the master equation, thereby neglecting any stochastic effects.}\BibitemShut
  {Stop}%
\end{thebibliography}%


\clearpage

\onecolumngrid

\begin{center}
    {\Large \textbf{Supplemental Material}}\\[.5em]
    {\large \textbf{Explosive opinion spreading with polarization and depolarization via asymmetric perception}}
\end{center}

\begin{center}
    Haoyang Qian and Malbor Asllani \\
    \textit{Department of Mathematics, Florida State University,\\
    1017 Academic Way, Tallahassee, FL 32306, United States of America}
\end{center}

\section{Mean-field proxy for the SIS model}

The individual-based model of Eq. (1) in the main text is given as:

\begin{equation}
\dot{x_i} = -\gamma x_i + \beta (1-x_i) \sum_{{j}} A_{i{j}} x_{{j}} (1+\alpha x_{{j}}^{d-1}),\, \forall i
\label{eq:1}
\end{equation}
where \(\alpha\) is the coefficient quantifying the nonlinear incidence, and \(d\) is the degree of the nonlinear incidence.\\

Assumptions:\\

\begin{enumerate}
  \item \(x_i(t) \sim k_i\), where \(k_i\) represents the degree of node \(i\).
  \item The degree distribution \(P(k)\) is symmetric and narrow.\\
\end{enumerate}

Given the equation:

\begin{equation}
\langle \dot{x} \rangle = -\gamma \langle x \rangle + \frac{\beta}{N} \sum_{j} x_j k_j - \frac{\beta}{N} \sum_{i,j} A_{ij} x_i x_j + \alpha \frac{\beta}{N} \sum_{i} (1 - x_i) \sum_{j} A_{ij} x_j^d
\label{eq:first}
\end{equation}

Substituting \( x_i = \langle x \rangle + \delta x_i \) and \( k_i = \langle k \rangle + \delta k_i \):

\[
\langle \dot{x} \rangle = -\gamma \langle x \rangle + \frac{\beta}{N} \sum_{j} (\langle x \rangle + \delta x_j)(\langle k \rangle + \delta k_j) - \frac{\beta}{N} \sum_{i,j} A_{ij} (\langle x \rangle + \delta x_i)(\langle x \rangle + \delta x_j) + \alpha \frac{\beta}{N} \sum_{i} \left(1 - (\langle x \rangle + \delta x_i)\right) \sum_{j} A_{ij} \left(\langle x \rangle + \delta x_j\right)^d
\]

Expanding and simplifying:

\begin{align*}
\langle \dot{x} \rangle \approx -\gamma \langle x \rangle + \frac{\beta}{N} \sum_{j} \big(\langle x \rangle \langle k \rangle &+ \langle x \rangle \delta k_j+ \delta x_j \langle k \rangle + \delta x_j \delta k_j\big) - \frac{\beta}{N} \sum_{i,j} A_{ij} \left(\langle x \rangle^2 + \langle x \rangle \delta x_j + \delta x_i \langle x \rangle + \delta x_i \delta x_j\right) +\\ &+ \alpha \frac{\beta}{N} \sum_{i} \left(1 - \langle x \rangle - \delta x_i\right) \sum_{j} A_{ij} \left(\langle x \rangle^d + d \langle x \rangle^{d-1} \delta x_j\right)\,,
\end{align*}

{where we have approximated \((x+\delta x)^d \approx x^d + dx^{d-1}\delta x\) for \(\delta x \ll x\).} Combining the terms, we obtain:

\begin{align}
\langle \dot{x} \rangle &= -\gamma \langle x \rangle + \beta \langle x \rangle \langle k \rangle - \beta \langle x \rangle^2 \langle k \rangle + \alpha \beta \left(1 - \langle x \rangle\right) \langle k \rangle \langle x \rangle^d
+ \frac{\beta}{N} \sum_{j} \left(\langle x \rangle \delta k_j + \delta x_j \langle k \rangle + \delta x_j \delta k_j\right)+\nonumber\\ &- \frac{\beta}{N} \sum_{i,j} A_{ij} (\langle x \rangle \delta x_j + \delta x_i \langle x \rangle + \delta x_i \delta x_j) + \alpha \frac{\beta}{N} {\cancel{\sum_{i} \delta x_i}} \sum_{j} A_{ij} \langle x \rangle^d + \alpha \frac{\beta}{N} (1 - \langle x \rangle) \sum_{{i,}j} A_{ij} d \langle x \rangle^{d-1} \delta x_j +\nonumber\\ &- \alpha \frac{\beta}{N} \sum_{{i,}j} A_{ij} d \langle x \rangle^{d-1} {\delta x_i }\delta x_j\nonumber  
\end{align}

{
\begin{align}
&= -\gamma \langle x \rangle + \beta \langle x \rangle \langle k \rangle - \beta \langle x \rangle^2 \langle k \rangle + \alpha \beta \left(1 - \langle x \rangle\right) \langle k \rangle \langle x \rangle^d
+ \frac{\beta}{N} \left[ \langle x \rangle\cancel{\sum_{j} \delta k_j} + \cancel{\langle k \rangle \sum_{j} \delta x_j}  + \sum_{j} \delta x_j \delta k_j \right]+\\ &- \frac{\beta}{N}\left[ 2\langle x \rangle \sum_j k_j \delta x_j + \sum_{i,j} A_{ij}\delta x_i \delta x_j\right] + \alpha \frac{\beta}{N}d \langle x \rangle^{d-1} \left[ (1 - \langle x \rangle) \sum_j k_j \delta x_j - \sum_{i,j} A_{ij} \delta x_i \delta x_j\right]\nonumber\\ 
&= -\gamma \langle x \rangle + \beta \langle x \rangle \langle k \rangle - \beta \langle x \rangle^2 \langle k \rangle + \alpha \beta \left(1 - \langle x \rangle\right) \langle k \rangle \langle x \rangle^d + \langle k \rangle \frac{\beta}{N} \left(\alpha d \langle x \rangle^{d-1} (1 - \langle x \rangle)- 2\langle x \rangle \right)\cancel{\sum_j \delta x_j} \nonumber \\ &+ \frac{\beta}{N} \left[\left(1-2\langle x \rangle +\alpha d \langle x \rangle^{d-1} \left(1 - \langle x \rangle\right) \right)\sum_{j} \delta x_j \delta k_j - \left(1+\alpha d \langle x \rangle^{d-1}\right) \sum_{i,j} A_{ij}\delta x_i \delta x_j\right] \nonumber 
\label{eq:expansion}
\end{align}
}

Given the symmetric and narrow degree distribution, higher-order terms and products of deviations (such as \(\delta x_j \delta k_j\) and \(\delta x_i \delta x_j\)) are small because \(\delta k_i \ll \langle k \rangle\) and \(\delta x_i \ll \langle x \rangle\). Additionally, the symmetry of the degree distribution \(P(k)\) implies that the sums of \(\delta k_i\) and \(\delta x_i\) average out to zero. {Furthermore, the term \(\sum_{i,j} A_{ij}\delta x_i \delta x_j\) consists of both positive and negative contributions, resulting in a total value that remains relatively small.} Consequently, the terms involving deviations \(\delta k_i\) and \(\delta x_i\) can be neglected. {The result of Eqs. (6) provides insight into the difference in the critical points between the individual-based definitions and the mean-field approximation. When all terms \(\langle x \rangle\) are considered, this difference depends on \(\sum_j \delta x_j \delta k_j\).} Combining these considerations, the final expression for \(\langle \dot{x} \rangle\) is:

\begin{equation}
\langle \dot{x} \rangle = (\tilde{\beta}-\gamma) \langle x \rangle - \tilde{\beta} \langle x \rangle^2 + \alpha \tilde{\beta} (1 - \langle x \rangle) \langle x \rangle^d
\label{eq:2}
\end{equation}
where we have used \(\tilde{\beta} = \beta \langle k \rangle\).

\subsection{Analysis for different \(\alpha\) values}

Starting from the previous Eq. \eqref{eq:2}, we consider the following cases:

\subsubsection{Case 1: \(\alpha = 0\)}

When \(\alpha = 0\), the equation simplifies to:

\begin{equation}
\langle \dot{x} \rangle = (\tilde{\beta}-\gamma) \langle x \rangle - \tilde{\beta} \langle x \rangle^2
\label{eq:3}
\end{equation}

This is a quadratic equation, corresponding to the normal form of a transcritical bifurcation \cite{strogatz_nonlinear_2007}.

\subsubsection{Case 2: \(\alpha > 0\) and \(d = 2\)}

When \(\alpha > 0\) and \(d = 2\), the equation becomes:

\[
\langle \dot{x} \rangle = -\gamma \langle x \rangle + \tilde{\beta} \langle x \rangle - \tilde{\beta} \langle x \rangle^2 + \alpha \tilde{\beta} (1 - \langle x \rangle) \langle x \rangle^2
\]

Factoring out \(\tilde{\beta}\) only for the middle term, we get:

\begin{equation}
\langle \dot{x} \rangle =  (\tilde{\beta} - \gamma) \langle x \rangle + \tilde{\beta}(\alpha - 1) \langle x \rangle^2 - \alpha \tilde{\beta} \langle x \rangle^3
\label{eq:4}
\end{equation}

This is a cubic equation, corresponding to the normal form of a supercritical pitchfork bifurcation for \(\alpha \leq 1\) and a subcritical pitchfork bifurcation for \(\alpha > 1\) \cite{strogatz_nonlinear_2007}.

\subsection{A purely nonlinear incidence model}

Eq. \eqref{eq:2} incorporates both linear and non-linear terms, a choice made purely for comparison purposes. However, since our primary interest lies in the direct effect that the non-linear terms alone have on the explosive transitions, we will simplify the previous equation to:

\begin{equation}
\dot{x_i} = -\gamma x_i + \beta \alpha (1 - x_i) \sum_{j} A_{ij} x_j^d,\, \forall i
\label{eq:5}
\end{equation}
where in the limit \(d = 1\), we recover the linear incidence term.

For reasons we will explain in the following, we need to slightly modify the above equation to include another slack compartment. Unlike the setting discussed throughout the paper, now within an individual, there exist three different possible states: Susceptible (S), whose opinions are subject to change, Infected (I), who propagate new opinions, and a subset of units inside an individual (D), who remain indifferent to new opinions; they hold steadfast to their own beliefs and cannot be influenced. This means that the normalization condition now extends to \( S + I + D = 1 \). For the sake of representation, we will simplify the notation by substituting \( D \) with \( 1 - D \). Thus, we have the relationship \( S + I = D \) with \( 0 \leq D \leq 1 \). Consequently, Eq. \eqref{eq:5} can be modified as follows:

\begin{equation}
\dot{x_i} = -\gamma x_i + \beta \alpha (D - x_i) \sum_{j} A_{ij} x_j^d,\, \forall i\,.
\label{eq:6}
\end{equation}

A straightforward application of the mean-field approximation method would give us Eq. \eqref{eq:4} again, where we have dropped the linear incidence term and weighted the square term with \(D\). For the particular case \(d = 2\), it can be expressed as follows:

\begin{equation}
\langle \dot{x} \rangle = -\gamma \langle x \rangle + D \alpha \tilde{\beta} \langle x \rangle^2 - \alpha \tilde{\beta} \langle x \rangle^3
\label{eq:7}
\end{equation}

From the polynomial form of Eq. \eqref{eq:7}, if \(-\gamma < 0\), the system is stable, which means new opinions cannot spread. If \(-\gamma > 0\), the system becomes unstable, meaning that new opinions will spread out. In Fig. \ref{fig:figure17}, we have shown the bifurcation curves of the numerical Eq. \eqref{eq:6} and the analytical proxy Eq. \eqref{eq:7} for different values of \(D\). The numerical results closely align with the analytical predictions, indicating a high degree of similarity between the two.
In particular, it can be noted that for \(D = 1\), and when \(-\gamma = 0\), the equilibrium \(\langle x \rangle^*\) stabilizes at 1 and cannot increase further. This phenomenon occurs because, as indicated by Eq. \eqref{eq:7}, when \(-\gamma = 0\), there are two equilibrium points: \(\langle x \rangle^* = 0\) and \(\langle x \rangle^* = 1\). In fact, when looking for a non-null fixed point of Eq. \eqref{eq:7}, we need to find the roots of \(\gamma + \alpha \tilde{\beta} \langle x \rangle(D-\langle x\rangle)=0\), and if \(D=1\), it has no solution for \(-\gamma > 0\). This is also the reason why we introduced the compartment of indifferent individuals \(D\) to bypass this technical limitation. Unlike the model considered in the main text, here, in all the scenarios, the system exhibits a discontinuous transition for both the polarization and depolarization regimes.

\begin{figure}[h!]
    \centering
\includegraphics[width=\textwidth]{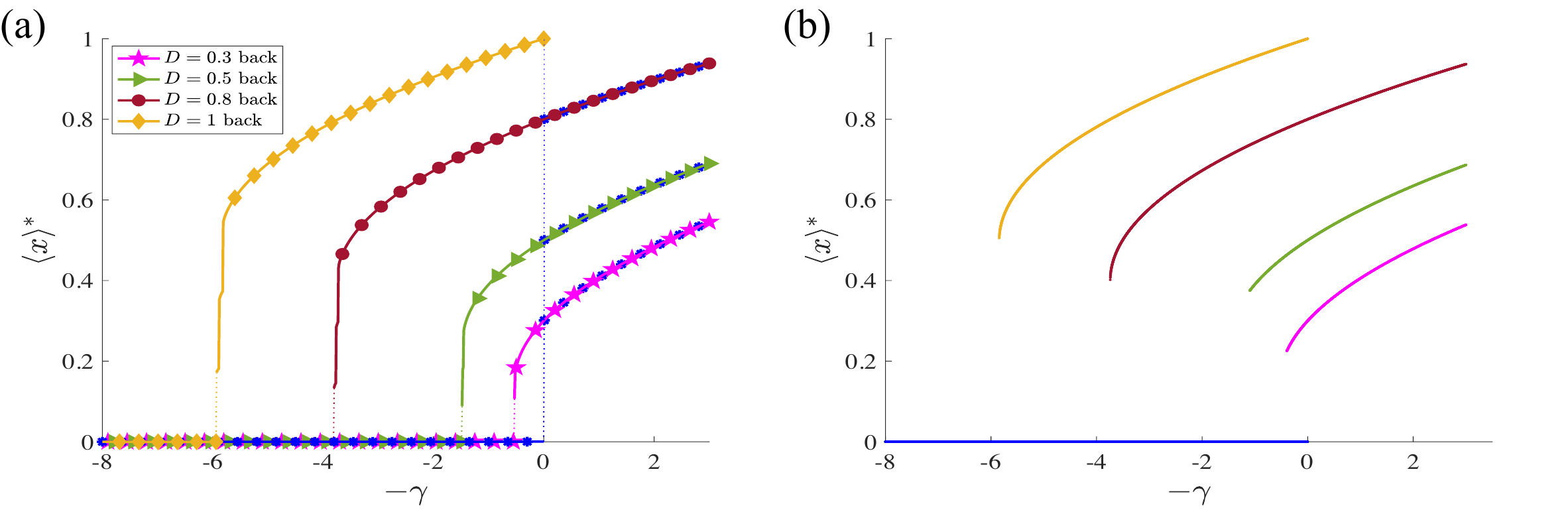}
    \caption{The mean equilibrium state, \(\langle x \rangle^*\), is plotted against the recovery rate \(-\gamma\) with model parameters \(\beta = 1\) and \(d = 2\) for both numerical (a) and analytical (b) models. In the numerical case (left), the forward direction is shown with blue asterisks, and the backward direction is shown with other colors. The analytical results (right) use the same color scheme and parameters as the numerical case, corroborating the numerical findings.}
    \label{fig:figure17}
\end{figure}


\section{Spectral Properties of modular networks}

Here, we elucidate the characteristics of the principal eigenvector of the adjacency matrix by first recalling a part of the Perron–Frobenius theorem \cite{horn2012matrix}.

\begin{theorem}[Perron–Frobenius]
For {an irreducible non-negative} adjacency matrix $\bm{A}$, there exists a unique largest eigenvalue $\lambda_A^{\textnormal{max}}$ associated with a principal {(right)} eigenvector $\bm{\Phi}^{\textnormal{max}}$ whose components are all positive. Moreover, all other {(right)} eigenvectors possess at least one negative component.
\label{th:PF}
\end{theorem}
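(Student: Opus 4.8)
The plan is to prove the four distinct claims bundled in the statement—existence of a nonnegative eigenvector at the spectral radius, its strict positivity, simplicity of the positive direction, and the sign structure of every remaining eigenvector—by deploying irreducibility at each stage. I would begin with the variational (Collatz--Wielandt) characterization. For a nonzero vector $\bm{x}\geq\bm{0}$ define $r(\bm{x}) = \min_{i:\,x_i>0}(\bm{A}\bm{x})_i/x_i$, and set $\lambda = \sup_{\bm{x}\geq\bm{0},\,\bm{x}\neq\bm{0}} r(\bm{x})$. Restricting to the compact standard simplex and using irreducibility—which guarantees $\bm{A}\bm{x}\neq\bm{0}$ whenever $\bm{x}\geq\bm{0}$ is nonzero—one shows the supremum is attained at some $\bm{\Phi}\geq\bm{0}$, and that this maximizer satisfies $\bm{A}\bm{\Phi}=\lambda\bm{\Phi}$: if $(\bm{A}\bm{\Phi})_i > \lambda\Phi_i$ for some $i$, a small perturbation of $\bm{\Phi}$ in the $i$-th coordinate would strictly raise $r$, a contradiction.

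Next I would upgrade nonnegativity to strict positivity using the combinatorial form of irreducibility: $(\mathbb{I}+\bm{A})^{\Omega-1}$ has strictly positive entries. Since $\bm{A}\bm{\Phi}=\lambda\bm{\Phi}$ gives $(\mathbb{I}+\bm{A})^{\Omega-1}\bm{\Phi} = (1+\lambda)^{\Omega-1}\bm{\Phi}$, and the left-hand side is strictly positive for $\bm{\Phi}\geq\bm{0}$, $\bm{\Phi}\neq\bm{0}$, no component of $\bm{\Phi}$ can vanish; hence $\bm{\Phi}=\bm{\Phi}^{\textnormal{max}}>\bm{0}$. That $\lambda=\lambda_A^{\textnormal{max}}$ is the spectral radius then follows by taking any eigenpair $(\mu,\bm{v})$ and applying $\bm{A}$ to the entrywise modulus $|\bm{v}|$: from $|\mu|\,|\bm{v}| = |\bm{A}\bm{v}| \leq \bm{A}|\bm{v}|$ one reads off $r(|\bm{v}|)\geq|\mu|$, so $|\mu|\leq\lambda$.

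For simplicity of the positive direction, suppose two eigenvectors for $\lambda$ were linearly independent; since $\bm{A}$ is real (indeed symmetric, the network being undirected) they can be taken real, and a suitable real combination would be nonnegative with a vanishing component—an eigenvector for $\lambda$ that contradicts the strict-positivity argument above unless it is identically zero, forcing linear dependence. Finally, for the \emph{moreover} claim I would apply the theorem to $\bm{A}^{\top}$ (also irreducible and nonnegative) to obtain a strictly positive left Perron vector $\bm{w}>\bm{0}$. For any eigenvector $\bm{v}$ with eigenvalue $\mu\neq\lambda$, the identity $\lambda\,\bm{w}^{\top}\bm{v} = \bm{w}^{\top}\bm{A}\bm{v} = \mu\,\bm{w}^{\top}\bm{v}$ forces $\bm{w}^{\top}\bm{v}=0$; as $\bm{w}>\bm{0}$, the real vector $\bm{v}$ must carry entries of both signs, and in particular at least one negative component, while the residual case $\mu=\lambda$ is excluded by simplicity.

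The main obstacle I anticipate is the first step—making the variational maximum rigorous—because $r(\bm{x})$ need not be continuous where a coordinate vanishes, so the maximizer could a priori sit on the boundary of the simplex. Irreducibility is precisely the hypothesis that tames this: it both prevents $\bm{A}\bm{x}$ from collapsing to zero and promotes any nonnegative eigenvector to a strictly positive one. The crux of the entire argument is therefore to invoke the consequence $(\mathbb{I}+\bm{A})^{\Omega-1}>0$ at exactly the right moments; the simplicity step is the second delicate point, but it reduces to the very same positivity-forcing mechanism.
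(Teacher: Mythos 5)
Your proposal is a correct (modulo one fixable imprecision, noted below) rendition of the standard Collatz--Wielandt proof of Perron--Frobenius, but it is worth saying up front that the paper does not prove this statement at all: Theorem~\ref{th:PF} is recalled as a classical result with a citation to \cite{horn2012matrix}, and the SM only \emph{uses} it (to argue that the principal eigenvector of the weakly coupled modular network is strictly positive while the subdominant, module-localized eigenvectors must change sign). So there is no in-paper argument to compare against; what you have supplied is the textbook proof the citation points to. Your architecture is sound: the variational characterization $\lambda=\sup_{\bm{x}\geq\bm{0}}\min_{i:\,x_i>0}(\bm{A}\bm{x})_i/x_i$, the upgrade to strict positivity via $(\mathbb{I}+\bm{A})^{\Omega-1}>0$, simplicity by subtracting a multiple of a putative second eigenvector until a component vanishes, and the sign claim for the remaining eigenvectors via orthogonality to the (left) Perron vector $\bm{w}>\bm{0}$, which for the symmetric adjacency matrix at hand is just $\bm{\Phi}^{\textnormal{max}}$ itself. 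The one step that does not survive literal scrutiny is the perturbation argument showing the maximizer is an eigenvector: increasing the $i$-th coordinate of $\bm{\Phi}$ raises the ratio $(\bm{A}\bm{\Phi})_j/\Phi_j$ at a minimizing index $j$ only if $A_{ji}>0$, so a single-coordinate bump need not strictly raise $r$. The standard repair is the same device you use elsewhere: set $\bm{z}=\bm{A}\bm{\Phi}-\lambda\bm{\Phi}\geq\bm{0}$ and note that if $\bm{z}\neq\bm{0}$ then $(\mathbb{I}+\bm{A})^{\Omega-1}\bm{z}>\bm{0}$, whence $\bm{u}=(\mathbb{I}+\bm{A})^{\Omega-1}\bm{\Phi}$ satisfies $\bm{A}\bm{u}-\lambda\bm{u}>\bm{0}$ and $r(\bm{u})>\lambda$, a contradiction. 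With that substitution the proof is complete, and since $\bm{A}$ is symmetric the geometric simplicity you establish already gives algebraic simplicity, so the "unique largest eigenvalue" clause is fully covered.
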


Let us denote \(\bm{A}_0\) as the initial adjacency matrix, with blocks \(\bm{A}_1, \bm{A}_2, \ldots, \bm{A}_M\) representing the adjacency matrices for \(M\) disconnected clusters of nodes, also known as modules \cite{newman_networks_2018}. The zero matrices \(\mathbf{0}\) (which may not necessarily be square and can be of different dimensions) indicate no connections between modules.

\[
\bm{A}_0 = \begin{bmatrix}
    \bm{A}_1 & \mathbf{0} & \cdots & \mathbf{0} \\
    \mathbf{0} & \bm{A}_2 & \cdots & \mathbf{0} \\
    \vdots & \vdots & \ddots & \vdots \\
    \mathbf{0} & \mathbf{0} & \cdots & \bm{A}_M
\end{bmatrix}
\]

To obtain a fully connected, yet strongly modular network with adjacency matrix \(\bm{A}\), the previous zero matrices will be replaced by sparse matrices, constituting a block matrix with a zero block diagonal denoted by $\bm{B_{\epsilon}}$, indicating a few connections between clusters.

\[
\bm{A} = \bm{A}_0 + \bm{B}_{\epsilon}
\]

The weakly perturbed matrix \(\bm{A}\) has slightly modified eigenvalues and eigenvectors compared to the original matrix \(\bm{A_0}\). The set of all eigenvalues (spectrum) of \(\bm{A}\) can be represented as:

\[
\sigma(\bm{A}) = \sigma(\bm{A_0}) + \beta_{\epsilon}
\]
where \(\sigma(\bm{A})\) and \(\sigma(\bm{A_0})\) are row vectors of the eigenvalues of \(\bm{A}\) and \(\bm{A_0}\), respectively, and \(\beta_{\epsilon}\) is a row vector of small perturbations.
Similarly, the eigenvectors of \(\bm{A}\) can be expressed as:

\[
\Phi(\bm{A}) = \Phi(\bm{A_0}) + \gamma_{\epsilon}
\]
where \(\Phi(\bm{A})\) and \(\Phi(\bm{A_0})\) are matrices whose columns are the eigenvectors of \(\bm{A}\) and \(\bm{A_0}\), respectively, and \(\gamma_{\epsilon}\) is a matrix containing small perturbations to the eigenvectors.
Thus, \(\beta_{\epsilon}\) and \(\gamma_{\epsilon}\) represent small changes to the eigenvalues and eigenvectors, respectively, due to the weak perturbation applied to the original matrix.
Then, for a given eigenvalue \(\lambda_0\) of the unperturbed matrix:

\[
\bm{A_0} \bm{\Phi}_0 = \begin{bmatrix}
    \bm{A}_1\bm{\Phi}_1 \\
    \bm{A}_2\bm{\Phi}_2 \\
    \vdots \\
    \bm{A}_M\bm{\Phi}_M
\end{bmatrix} = \lambda_0 \bm{\Phi}_0 = \begin{bmatrix}
    \lambda_0 \bm{\Phi}_1 \\
    \lambda_0 \bm{\Phi}_2 \\
    \vdots \\
    \lambda_0 \bm{\Phi}_M
\end{bmatrix}
\]
where 

\[
\bm{\Phi}_0 = \begin{bmatrix}
    \bm{\Phi}_1 \\
    \bm{\Phi}_2 \\
    \vdots \\
    \bm{\Phi}_M
\end{bmatrix}
\]
is the corresponding eigenvector of \(\lambda_0\), and \(\bm{\Phi}_1\), \(\bm{\Phi}_2\), \(\ldots\), \(\bm{\Phi}_M\) are the respective eigenvectors of \(\bm{A}_1\), \(\bm{A}_2\), \(\ldots\), \(\bm{A}_M\). When the matrix is weakly perturbed, the perturbed eigenvalue \(\lambda\) and the corresponding perturbed eigenvector \(\bm{\Phi}\) can be expressed as:

\[
\lambda = \lambda_0 + \beta_{\epsilon}^{(0)}\,,
\hspace*{2cm}
\bm{\Phi} = \bm{\Phi}_0 + \gamma_{\epsilon}^{(0)}
\]
where \(\beta_{\epsilon}^{(0)}\) is a small perturbation added to the eigenvalue, and \(\gamma_{\epsilon}^{(0)}\) is a vector containing small perturbations added to the eigenvector components.

Now, given that \(\bm{A}_0\) is a block matrix:

\[
\det(\bm{A}_0-\lambda \bm{I}) = \det(\bm{A}_1-\lambda \bm{I}) \det(\bm{A}_2-\lambda \bm{I}) \cdots \det(\bm{A}_M-\lambda \bm{I}) = 0
\]
this means that \(\sigma(\bm{A}_0) = \sigma(\bm{A}_1) \cup \sigma(\bm{A}_2) \cup \dots \cup \sigma(\bm{A}_M)\).

On the other hand, the eigenvalue \(\lambda_0\) should be an eigenvalue of all the blocks:

\[
\bm{A}_1 \bm{\Phi}_1 = \lambda_0 \bm{\Phi}_1, \quad \bm{A}_2 \bm{\Phi}_2 = \lambda_0 \bm{\Phi}_2, \quad \dots, \quad \bm{A}_M \bm{\Phi}_M = \lambda_0 \bm{\Phi}_M
\]

One way this can occur is if \(\lambda_0\) is an eigenvalue of \(\bm{A}_1\) only, meaning \(\lambda_0 \in \sigma(\bm{A}_1)\) and \(\lambda_0 \notin \sigma(\bm{A}_2) \cup \dots \cup \sigma(\bm{A}_M)\). In this case, the corresponding eigenvector is:

\[
\Phi(\bm{A}_0) = \begin{bmatrix}
    \Phi_1 \\
    \bm{0} \\
    \vdots \\
    \bm{0}
\end{bmatrix}.
\]
Similarly, if \(\lambda_0 \in \sigma(\bm{A}_2)\) and assuming \(\lambda_0 \notin \sigma(\bm{A}_1) \cup \dots \cup \sigma(\bm{A}_M)\), then

\[
{\Phi}(\bm{A}_0) = \begin{bmatrix}
    \bm{0} \\
    \bm{\Phi}_2 \\
    \vdots \\
    \bm{0}
\end{bmatrix}.
\]
And so on, for \(i \in \{1, 2, \dots, M\}\),

\[
{\Phi}(\bm{A}_0) = \begin{bmatrix}
    \bm{0} \\
    \vdots \\
    \bm{\Phi}_i \\
    \vdots \\
    \bm{0}
\end{bmatrix}.
\]
This implies that the eigenvectors of \(\bm{A}_0\) have one non-zero component corresponding to an eigenvector of one block \(\bm{A}_i\), while the others are zero. Given the uniqueness of eigenvectors, this is the only solution. Since \(\bm{A}\) slightly differs from \(\bm{A}_0\), the other components of the eigenvectors of \(\bm{A}\) are not exactly zero but close to it. {If the graph is connected, the adjacency matrix \(\bm{A}\) is irreducible, and the Perron-Frobenius Theorem \ref{th:PF} guarantees that the principal eigenvector corresponding to the largest eigenvalue is strictly positive.} In Fig. \ref{fig:figure7}, we compare the principal eigenvectors of \(\bm{A}\) and \(\bm{A}_0\). Two of the three blocks in \(\bm{A}_0\) are exactly zero, so their corresponding eigenvector components are zero. However, the eigenvector of \(\bm{A}\) shows small non-zero values in these blocks due to perturbations, indicating a slight spread of influence across all blocks, with most still concentrated in the dominant block.

\begin{figure}[h!]
\centering
\includegraphics[width=0.7\linewidth]{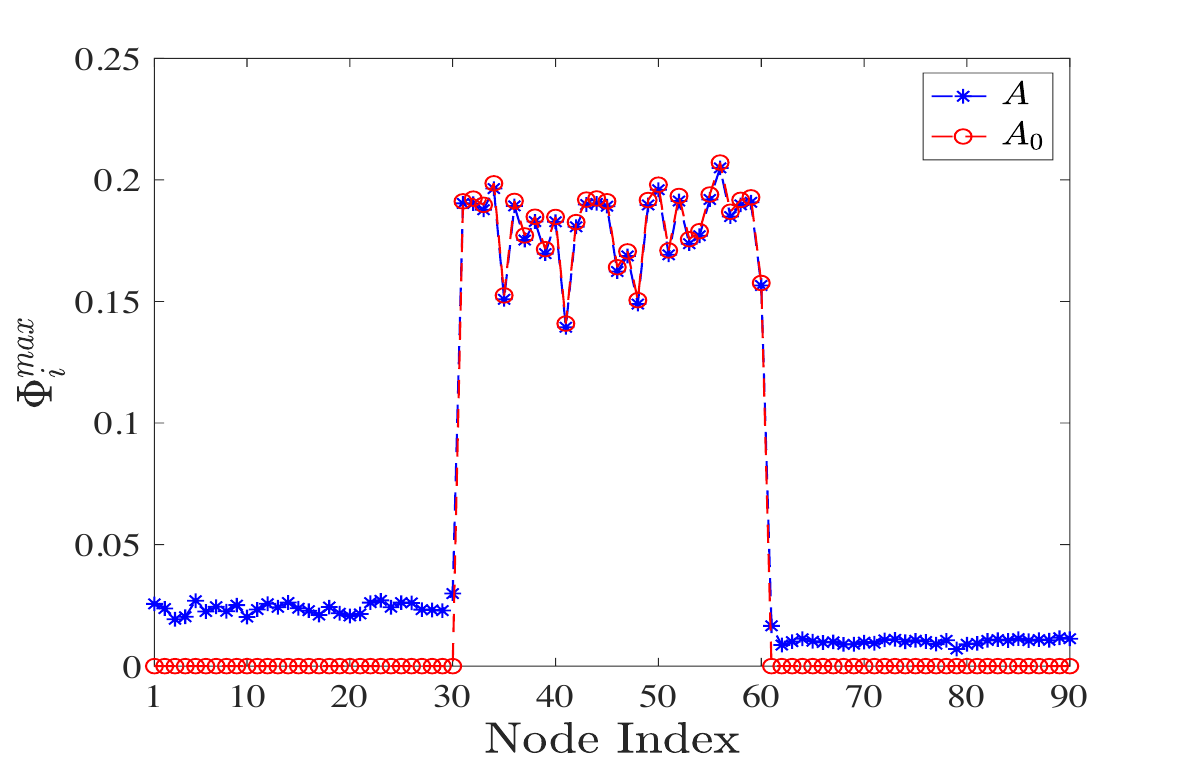}
\caption{\label{fig:figure7}Comparison between the two principal eigenvectors of \(\bm{A}\) and \(\bm{A}_0\). As can be noticed, two of the three blocks constituting \(\bm{A}_0\) are exactly zero.}
\end{figure}


\section{Upper and lower bounds for the $\lambda_A^{\max}$}

The numerical findings in the main text show that the Perron eigenvalue \(\lambda_A^{\max}\) is less than the mean degree \(\langle k \rangle\). To elucidate this phenomenon, we will first introduce two theorems and then provide a proof of the inequality that establishes upper and lower bounds for the Perron eigenvalue.

\begin{theorem}[Rayleigh Quotient~\cite{horn2012matrix}]
For a symmetric matrix \(\bm{A}\) and any non-zero vector \(\bm{v}\), the Rayleigh quotient satisfies:
\[
\frac{\bm{v}^T \bm{A} \bm{v}}{\bm{v}^T \bm{v}} \leq \lambda_A^{\max},
\]
where \(\lambda_A^{\max}\) is the largest eigenvalue of \(\bm{A}\). Equality holds when \(\bm{v}\) is the corresponding eigenvector \(\bm{\Phi}_A^{\max}\).
\label{th:Ray}
\end{theorem}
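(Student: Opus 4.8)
The plan is to reduce the statement to an elementary computation in an eigenbasis, leaning on the spectral theorem for real symmetric matrices (which applies here because $\bm{A}$ is the adjacency matrix of an undirected network and is therefore symmetric). First I would fix an orthonormal set of eigenvectors $\{\bm{\Phi}^i\}_{i=1}^{n}$ with $\bm{A}\bm{\Phi}^i = \lambda_i \bm{\Phi}^i$ and order the (real) spectrum as $\lambda_1 \geq \lambda_2 \geq \cdots \geq \lambda_n$, so that $\lambda_1 = \lambda_A^{\max}$. Because this basis is complete, any nonzero $\bm{v}$ decomposes as $\bm{v} = \sum_i c_i \bm{\Phi}^i$ with coordinates $c_i = \bm{\Phi}^{i\,T}\bm{v}$.

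Next I would rewrite both the numerator and the denominator of the Rayleigh quotient in these coordinates. Orthonormality gives $\bm{v}^T\bm{v} = \sum_i c_i^2$, while applying $\bm{A}$ termwise yields $\bm{v}^T\bm{A}\bm{v} = \sum_i \lambda_i c_i^2$. The quotient is then a weighted average of the eigenvalues, with nonnegative weights $w_i = c_i^2 / \sum_j c_j^2$ that sum to one. Bounding every eigenvalue above by the largest one gives the claim:
\[
\frac{\bm{v}^T\bm{A}\bm{v}}{\bm{v}^T\bm{v}} = \sum_i \lambda_i w_i \leq \lambda_A^{\max} \sum_i w_i = \lambda_A^{\max}.
\]

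For the equality clause I would observe that the inequality $\sum_i \lambda_i w_i \leq \lambda_A^{\max}$ is saturated exactly when $w_i = 0$ for every index with $\lambda_i < \lambda_A^{\max}$, i.e. when $\bm{v}$ lies in the eigenspace of $\lambda_A^{\max}$; the choice $\bm{v} = \bm{\Phi}_A^{\max}$ is the canonical such vector. There is no serious obstacle here, since the result is a textbook consequence of diagonalizability; the only point needing mild care is the equality statement when $\lambda_A^{\max}$ is degenerate, where the equality set is the full (possibly multidimensional) top eigenspace rather than the single vector $\bm{\Phi}_A^{\max}$. For the present application—where Perron–Frobenius (Theorem \ref{th:PF}) guarantees a \emph{simple} largest eigenvalue—this subtlety does not arise, so stating equality at $\bm{\Phi}_A^{\max}$ is unambiguous.
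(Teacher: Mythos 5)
Your argument is correct and complete: diagonalizing \(\bm{A}\) in an orthonormal eigenbasis, writing the Rayleigh quotient as a convex combination \(\sum_i \lambda_i w_i\) of the eigenvalues, and bounding it by \(\lambda_A^{\max}\) is the standard proof, and your remark that equality characterizes the full top eigenspace (collapsing to the single Perron vector here because Theorem \ref{th:PF} makes \(\lambda_A^{\max}\) simple for the irreducible adjacency matrices in question) is the right way to handle the equality clause. Note, however, that the paper does not prove this statement at all: it is quoted verbatim as a textbook result from Horn and Johnson and only \emph{used} as a lemma in the proof of the bound \(\langle k \rangle \leq \lambda_A^{\max} \leq k_{\max}\), so there is no in-paper argument to compare against; your proof simply supplies the standard justification the authors delegate to the reference.
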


\begin{theorem}[Gershgorin Discs~\cite{horn2012matrix}]
Let \(\bm{A}\) be an \(n \times n\) complex matrix with entries \(A_{ij}\). For each \(i\), define the Gershgorin disc \(\mathcal{D}(A_{ii}, R_i)\) centered at \(A_{ii}\) with radius \(R_i = \sum_{j \neq i} |A_{ij}|\). Then, all eigenvalues \(\lambda\) of \(\bm{A}\) lie within the union of these \(n\) discs:
\[
\lambda \in \bigcup_{i=1}^{n} \mathcal{D}(A_{ii}, R_i)\,.
\]
\label{th:Gersh}
\end{theorem}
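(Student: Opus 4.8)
The plan is to prove the containment by localizing an arbitrary eigenvalue through the largest-magnitude component of its eigenvector. First I would fix an eigenvalue \(\lambda\) of \(\bm{A}\) together with an associated eigenvector \(\bm{v} \neq \bm{0}\), and select the index \(i\) at which \(|v_i|\) attains its maximum over all components, so that \(|v_i| \geq |v_j|\) for every \(j\) and, crucially, \(|v_i| > 0\) because \(\bm{v}\) is nonzero.

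Next I would read off the \(i\)-th row of the eigenvalue equation \(\bm{A}\bm{v} = \lambda \bm{v}\), which gives \(\sum_j A_{ij} v_j = \lambda v_i\). Isolating the diagonal contribution and rearranging yields \((\lambda - A_{ii}) v_i = \sum_{j \neq i} A_{ij} v_j\). Taking moduli and applying the triangle inequality, followed by the maximality of \(|v_i|\), produces the chain
\[
|\lambda - A_{ii}|\,|v_i| = \Bigl| \sum_{j \neq i} A_{ij} v_j \Bigr| \leq \sum_{j \neq i} |A_{ij}|\,|v_j| \leq \Bigl( \sum_{j \neq i} |A_{ij}| \Bigr) |v_i| = R_i |v_i|\,.
\]
Dividing through by \(|v_i| > 0\) gives \(|\lambda - A_{ii}| \leq R_i\), i.e. \(\lambda \in \mathcal{D}(A_{ii}, R_i) \subseteq \bigcup_{k=1}^{n} \mathcal{D}(A_{kk}, R_k)\). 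Since \(\lambda\) was an arbitrary eigenvalue, the union contains the entire spectrum, which is the claim.

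The only delicate point — and the step I would treat as the crux — is the choice of the maximizing index \(i\): the argument localizes \(\lambda\) in the single disc \(\mathcal{D}(A_{ii}, R_i)\), but the identity of \(i\) depends on \(\bm{v}\) and hence on \(\lambda\), so one cannot pin each eigenvalue to a prescribed disc in advance. This is precisely why the statement asserts membership in the union over all \(i\) rather than in any particular disc. Everything else reduces to a one-line application of the triangle inequality, and no appeal to symmetry, reality of the spectrum, or non-negativity of \(\bm{A}\) is required, so the result holds for the general complex matrix as stated. For the adjacency matrices of interest here, where \(A_{ii} = 0\) and \(R_i = k_i\), this immediately specializes to \(|\lambda| \leq k_{\max}\), supplying the upper bound that I expect to combine with the Rayleigh-quotient lower bound \(\langle k \rangle \leq \lambda_A^{\max}\) in the subsequent inequality.
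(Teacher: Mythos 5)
Your argument is the standard and correct proof of the Gershgorin disc theorem: choosing the index of the maximal-modulus eigenvector component, isolating the diagonal term in that row, and applying the triangle inequality is exactly the canonical derivation, and your remark that the particular disc containing $\lambda$ depends on the eigenvector (hence the union in the statement) is the right caveat. The paper itself states this result only as a cited classical theorem from \cite{horn2012matrix} and offers no proof, so there is nothing to compare against; your specialization to adjacency matrices ($A_{ii}=0$, $R_i=k_i$, giving $|\lambda|\leq k_{\max}$) matches precisely how the paper invokes the theorem in its proof of the bounds on $\lambda_A^{\max}$.
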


We now utilize these two theorems to establish upper and lower bounds for the Perron eigenvalue \(\lambda_A^{\max}\).

\begin{theorem}[Bounds on \(\lambda_A^{\max}\)]
\[
\langle k \rangle \leq \lambda_A^{\max} \leq k_{\max}
\]
\label{theorem:ineq}
\end{theorem}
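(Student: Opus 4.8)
The plan is to prove the two bounds separately, each as a direct application of one of the two theorems just stated, using a judicious choice of test object. The lower bound follows from the Rayleigh quotient (Theorem \ref{th:Ray}), and the upper bound from Gershgorin's disc theorem (Theorem \ref{th:Gersh}). Throughout, I would use that $\bm{A}$ is symmetric (the network is undirected), so all eigenvalues are real and the Rayleigh characterization applies.

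For the lower bound $\langle k \rangle \leq \lambda_A^{\max}$, I would evaluate the Rayleigh quotient at the all-ones vector $\bm{v} = \bm{1}$. The denominator is $\bm{1}^T \bm{1} = n$, and the numerator is
\[
\bm{1}^T \bm{A} \bm{1} = \sum_{i,j} A_{ij} = \sum_i k_i = n \langle k \rangle,
\]
since summing row $i$ of the adjacency matrix recovers the degree $k_i$. Hence $\bm{1}^T \bm{A} \bm{1} / \bm{1}^T \bm{1} = \langle k \rangle$, and Theorem \ref{th:Ray} gives $\langle k \rangle \leq \lambda_A^{\max}$ immediately. Note that equality need not hold here unless $\bm{1}$ happens to be the Perron eigenvector, i.e. unless the graph is regular.

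For the upper bound $\lambda_A^{\max} \leq k_{\max}$, I would apply Theorem \ref{th:Gersh} directly to $\bm{A}$. Because all diagonal entries vanish ($A_{ii} = 0$), each Gershgorin disc is centered at the origin with radius $R_i = \sum_{j \neq i} |A_{ij}| = \sum_j A_{ij} = k_i$. Every eigenvalue therefore lies in $\bigcup_i \mathcal{D}(0, k_i)$, which is the single disc of radius $k_{\max} = \max_i k_i$. Consequently $|\lambda| \leq k_{\max}$ for every eigenvalue, and in particular $\lambda_A^{\max} \leq k_{\max}$.

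There is no genuine obstacle here; the argument is a two-line consequence of the cited theorems once the correct test vector ($\bm{1}$) and the vanishing-diagonal observation are in place. The only point meriting care is the symmetry/realness of the spectrum, which licenses the Rayleigh bound and lets the Gershgorin conclusion be read as a bound on the real eigenvalues rather than merely on their moduli. Combining the two displayed inequalities yields $\langle k \rangle \leq \lambda_A^{\max} \leq k_{\max}$, as claimed, and in particular confirms the $\langle k \rangle \leq \lambda_A^{\max}$ fact invoked in the main text to explain the shift in critical point.
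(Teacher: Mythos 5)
Your proof is correct and follows essentially the same route as the paper: the lower bound via the Rayleigh quotient evaluated at the (all-ones) uniform vector, and the upper bound via Gershgorin discs with vanishing diagonal. The only cosmetic difference is that the paper normalizes the test vector to $\bigl(1/\sqrt{N},\ldots,1/\sqrt{N}\bigr)^T$, whereas you use $\bm{1}$ and divide by $n$; your added remarks on symmetry and on when equality holds are consistent with the paper's note on regular graphs.
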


\begin{proof}

The upper bound \(\lambda_A^{\max} \leq k_{\max}\) follows from the Gershgorin Discs Theorem \ref{th:Gersh}, as \(k_{\max}\) corresponds to the largest radius of the Gershgorin disc that contains all eigenvalues.

For the lower bound, consider the vector \(\bm{v} = \left(\frac{1}{\sqrt{N}}, \frac{1}{\sqrt{N}}, \ldots, \frac{1}{\sqrt{N}}\right)^T\). Applying the Rayleigh quotient \ref{th:Ray}, we obtain:

\[
\frac{\bm{v}^T \bm{A} \bm{v}}{\bm{v}^T \bm{v}} = \frac{\displaystyle\sum_{i,j} A_{ij} \frac{1}{N}}{1} = \frac{1}{N} \displaystyle\sum_{i} k_i = \langle k \rangle \leq \lambda_A^{\max}
\]
\end{proof}

This result is illustrated graphically in Fig. \ref{fig:Disc}.

\underline{Note}: For regular graphs, where \(\langle k \rangle = k_i\) for all \(i\), we have \(\lambda_A^{\max} = \langle k \rangle\), because in this case the mean degree \(\langle k \rangle\) equals \(k_{\max}\). In this scenario, the leading eigenvector \(\bm{v}_{\max}\) is also the principal eigenvector \(\bm{\Phi}_A^{\max}\), corresponding to the largest eigenvalue \(\lambda_A^{\max}\). The Rayleigh quotient for \(\bm{v}_{\max} = (1, 1, \ldots, 1)\) directly gives:

\[
\lambda_A^{\max} = \frac{\bm{v}_{\max}^T \bm{A} \bm{v}_{\max}}{\bm{v}_{\max}^T \bm{v}_{\max}} = \langle k \rangle.
\]

\begin{figure}[h!]
    \centering
\includegraphics[width=\linewidth]{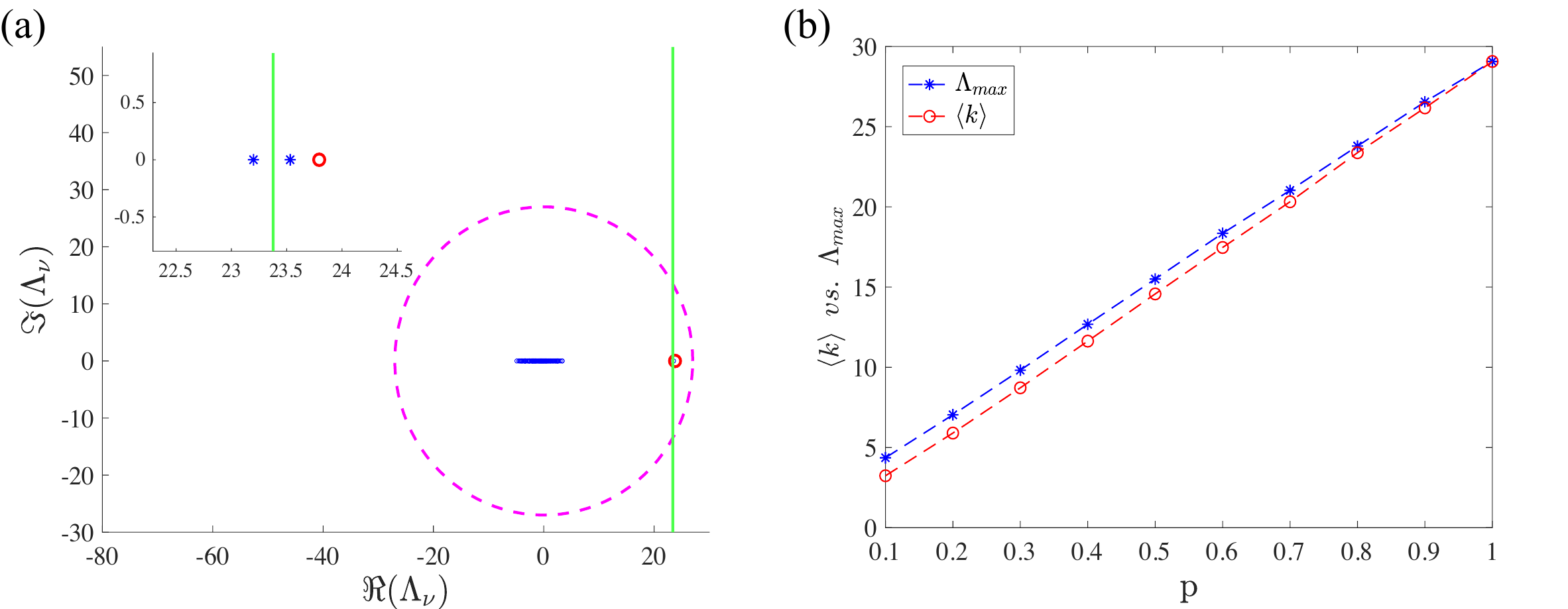}
    \caption{(a) The red point represents \(\Lambda_{max}\), the green line denotes \(\langle k \rangle\), and the blue points indicate the other eigenvalues. Here \( p=0.8 \) and from the inset it is clear that \(\Lambda_{max} < \langle k \rangle\). (b) illustrates that the mean degree is smaller than the largest eigenvalue, thereby supporting the Inequality result. As the probability $p$ of connection within each module increases, the difference between the largest eigenvalue and the mean degree diminishes, ultimately coinciding exactly when $p$ reaches 1. }
    \label{fig:Disc}
\end{figure}


{ \section{Mean-Field Proxy Challenging Topological Assumptions}

\subsection{Scale-free networks}

A \textit{scale-free network} is a network whose degree distribution follows a power law, \( P(k) \sim k^{-c} \), where \( c > 0 \) is typically between 2 and 3 in real-world networks \cite{newman_networks_2018}. One of the key features of scale-free networks is their broad asymmetric degree distribution, characterized by a few highly connected nodes, called \textit{hubs}, while the majority of nodes have relatively few connections. This structure makes scale-free networks highly heterogeneous, in contrast to Erdős–Rényi networks, where the degree distribution is much narrower, following a binomial (or Poisson) form, and most nodes have degrees close to the average \cite{newman_networks_2018}.

In the Barabási-Albert (BA) model, a widely used method for generating scale-free networks, the network grows by adding one node at a time, with each new node introducing \( m \) edges that connect to \( m \) existing nodes. These connections are established based on the mechanism of \textit{preferential attachment}, where nodes with higher degrees have a greater probability of receiving new connections. This process results in the emergence of hubs and a power-law degree distribution \( P(k) \sim k^{-c} \), with \( c = 3 \) in the standard BA model. In the following Fig. \ref{fig:App_SF}, we consider a three-module network, similar to the one described in the main text, with the difference that now each module is a scale-free subgraph generated using the Barabási-Albert model, and the modules are connected by single links.

\begin{figure}[h!]
    \centering
\includegraphics[width=\linewidth]{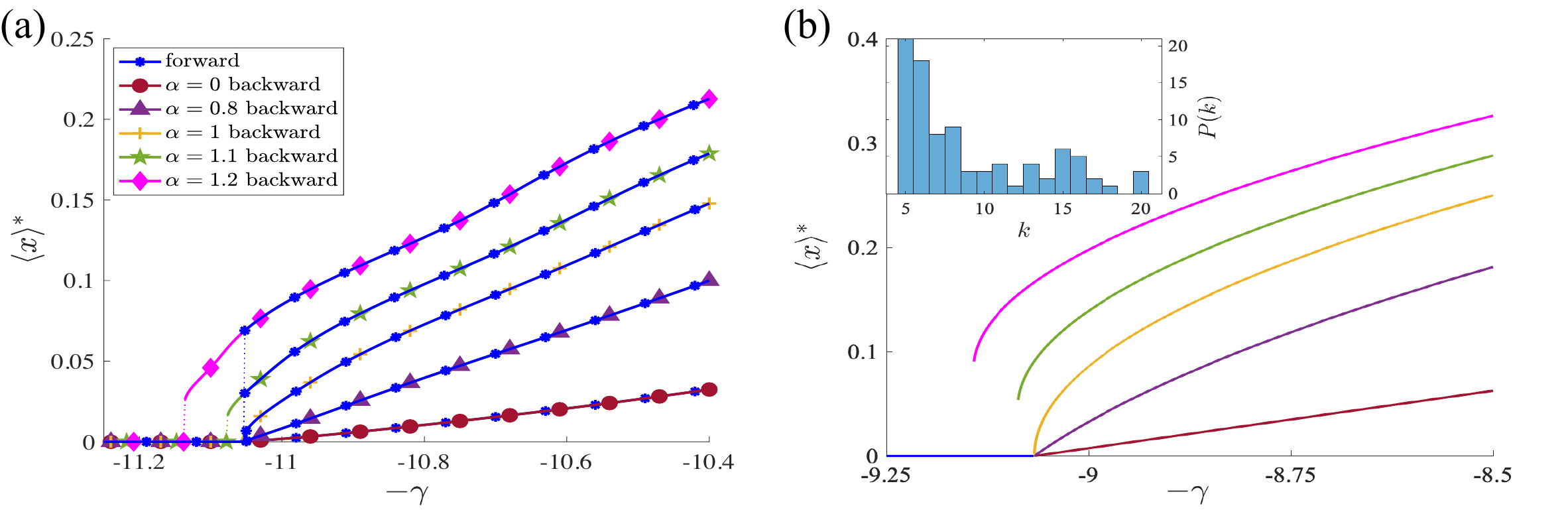}
    \caption{{(a) \textit{Numerical}: The blue lines represent the forward direction, while other colors represent the backward direction. For \(\alpha \leq 1\), the transitions are continuous, while for \(\alpha > 1\), they are discontinuous, characterized by abrupt changes and bistability. (b) \textit{Analytical}: Results confirm qualitative agreement with numerical findings. The inset shows the degree distribution, which follows a power-law with \(c = 3\). The network used is a modular network of 90 nodes, divided into 3 scale-free modules of 30 nodes each, with each module generated using the Barabási-Albert model with an attachment parameter \(m = 5\) and a degree distribution following a power law with exponent \(c = 3\). Single edges link the modules.}
    }
    \label{fig:App_SF}
\end{figure}

In this case, the numerical and analytical results exhibit very good qualitative agreement, despite significant differences in their critical points. This discrepancy arises because the mean degree of the scale-free network \(\langle k \rangle\) is substantially lower than its largest eigenvalue, due to the large gap between the average degree \(\langle k \rangle\) and the maximum degree \(k_{\max}\). Notably, this agreement holds even though the degree distribution is neither narrow nor symmetric, contradicting the second assumption. This behavior may be explained by the compensatory relationship \(x_i \sim k_i\), a degree-based assumption that works well for scale-free networks and is thus a hallmark of their structural dynamics \cite{pastor2001epidemic_1, pastor2001epidemic_2}.

\subsection{Small-world networks}

A \textit{small-world network} is characterized by high clustering coefficient and short path lengths, combining properties of regular lattices and random graphs. These networks are often used to model systems where local connections dominate but long-range links significantly reduce the average path length \cite{newman_networks_2018}.
The Watts-Strogatz (WS) model is a widely used generative model for small-world networks. It starts with a regular ring lattice where each node is connected to \(k\) (even) nearest neighbors, evenly split between both sides. Then, with probability \(p\), each edge is rewired to a random node. For \(p = 0\), the network remains a regular lattice; for \(p = 1\), it becomes a random graph. By tuning \(p\), the WS model interpolates between these extremes, capturing the transition from order to randomness and creating networks with small-world properties \cite{watts1998collective}. In the following, as the focus is on quantitatively investigating the validity of the mean-field approximation, we consider single-module small-world networks with 100 nodes.

To investigate the influence of \(p\) and \(k\) on the discrepancy between two modeling approaches—the \textit{Individual-Based Mean-Field (IBMF)}\footnote{The reason for referring to Eq. \eqref{eq:1} as a \textit{Mean-Field} is that it represents a mean-field approximation, which, in principle, should be derived from the particle dynamics described by the reaction through an averaging process over the master equation, thereby neglecting any stochastic effects.} and the \textit{Degree-Based Mean-Field (DBMF)}, introduced here as numerical and analytical references—we considered networks with \(N = 100\) nodes, set \(k = 4, 16, 30, 50\), and varied \(p\) from 0 to 1. In the \textit{DBMF} approach, the largest eigenvalue remains equal to \(k\), while in the \textit{IBMF} approach, the largest eigenvalue typically exceeds \(k\) as \(p\) increases. This divergence leads to differences in the critical points predicted by \textit{IBMF} and \textit{DBMF}, as shown in Fig. \ref{fig:App_WS} (a). 

\begin{figure}[h!]
    \centering
\includegraphics[width=\linewidth]{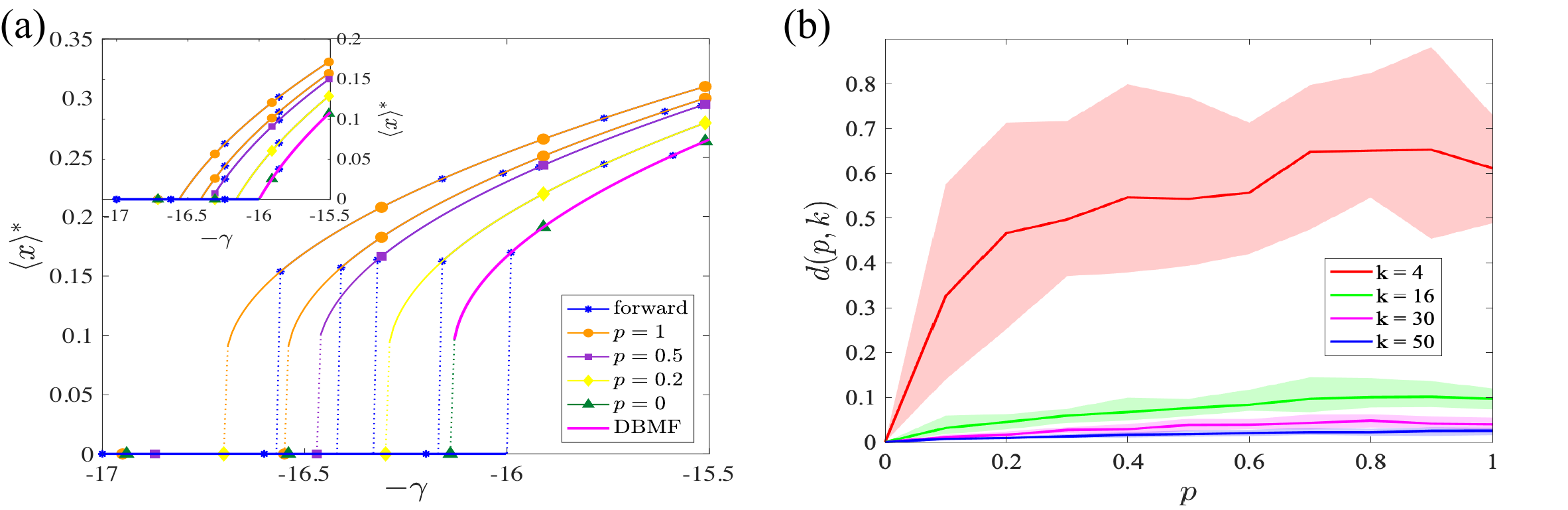}
    \caption{{(a) In the main panel, we set \(\alpha = 1.2\) and \(k = 16\). For \(p = 0, 0.2, 0.5\), we select examples where the error is closest to the average and plot the individual-based mean field. For \(p = 1\), we illustrate two cases: one with the maximum error and one with the minimum error. The case closer to the degree-based mean field corresponds to the smaller error. (b) We set \(N = 100\) and \(\alpha = 1.2\). The solid line represents the average error, while the shaded region indicates the error range. It can be observed that as \(k\) increases, the error decreases, and the range narrows.}
}
    \label{fig:App_WS}
\end{figure}

As seen in the scale-free case, the degree-based assumption \(x_i \sim k_i\), is a strong one, as we will show even for the small-world case. To further strengthen such assumption, the new opinion is seeded almost homogeneously across the network by perturbing the no-opinion fixed point. If this were not the case—such as in large rings where the opinion is initiated in a single node or individual—it would take significantly longer for the opinion to spread throughout the network, potentially affecting the accuracy of the mean-field proxy. In Fig. \ref{fig:App_WS} (a), we observed that when \(p = 0\), the IBMF and DBMF results are very close, which is the perfect scenario of the second assumption, which requires a narrow and symmetric degree distribution. This is because, in a Watts-Strogatz (WS) network with \(p = 0\), all nodes have the same degree and are thus equivalent, resulting in identical final states. Interestingly, as \(p\) increases, the second assumption diminishes, leading to a rise in the error. However, when \(p\) becomes sufficiently large, the WS network approaches the structure of a random graph, where better mixing compared to the original ring structure strengthens the first assumption. 

To quantify the error between the two methods, we first aligned the critical points by horizontally shifting the \textit{IBMF} curve to match the \textit{DBMF} critical point. The error was then calculated using \(d(p,k)=\sqrt{\sum_i \left(\text{IBMF}_i - \text{DBMF}_i\right)^2}\). As shown in Fig. \ref{fig:App_WS} (b), the error increases with \(p\), which is understandable given the diminishing the second assumption as \(p\) rises. This variability in results for the WS network when \(p = 1\) highlights the challenge of balancing the two assumptions: enhancing one may worsen the other. To address this for the WS model, we compensate for the weakening of second assumption at larger \(p\) by choosing a larger average degree, \(\langle k \rangle\). A higher \(\langle k \rangle\) keeps the degree distribution \(P(k)\) relatively narrow, while also improving mixing within the network, thereby strengthening the first assumption. This trade-off is particularly evident for \(4 < k \leq 30\), where the error initially increases but then shows a slight decline as \(p\) approaches 1, in line with the previous observation. In contrast, when \(k = 50\), this trend is less pronounced due to the small network size (\(N = 100\)), where even with a large \(p\), the structural changes in the graph remain limited, reducing the overall impact on the dynamics. Lastly, as expected for \(k = 4\), the error increases significantly with \(p\) and remains high across all values of \(p\), showing a slight decrease at \(p = 1\) but with considerable variability throughout.


\section{Polarization patterns beyond the critical point}

Previously, our main focus was on investigating the validity of the approximation and understanding the nature of phase transitions near the critical point, where the patterns were primarily determined by the most unstable eigenmode, resulting in relatively simple dynamics. In this section, we consider a richer scenario that emerges when the value of $\gamma$ is reduced further, allowing multiple eigenvalues of the Jacobian matrix to become positive. Specifically, for a network consisting of three modules, we anticipate that three eigenvalues will become significantly larger than the others. This leads to more intricate patterns, as the system's behavior becomes strongly influenced by the interplay of the corresponding eigenvectors. By extending the results numerically beyond this point, we aim to explore how these richer dynamics evolve and how polarization can be affected in the nonlinear regime under these conditions.

\begin{figure}[h!]
    \centering
\includegraphics[width=\linewidth]{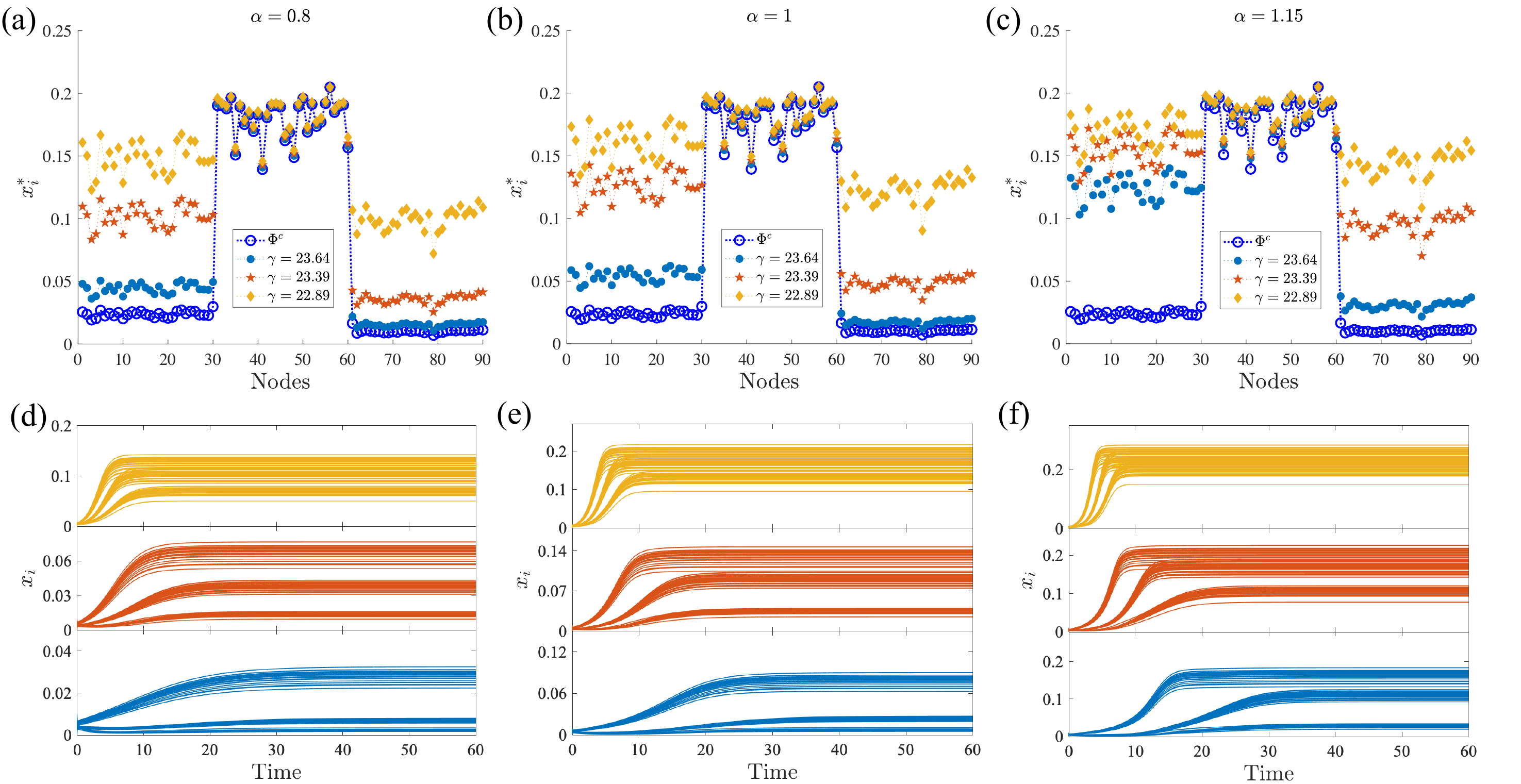}
    \caption{{\textit{Upper panels}: The panels (a), (b), and (c) correspond to \(\alpha = 0.8\), \(\alpha = 1\), and \(\alpha = 1.15\), with three different values of \(\gamma\): \(\gamma = 23.64\), where only one eigenvalue is positive; \(\gamma = 23.39\), where two eigenvalues are positive; and \(\gamma = 22.89\), where three eigenvalues are positive. As \(\alpha\) decreases, the distinction between the three modules becomes more pronounced, whereas increasing \(\alpha\) causes the modules to become more merged. The results are compared with the critical eigenvector corresponding to the largest eigenvalue. Additionally, polarization increases with decreasing \(\alpha\) and decreases with increasing \(\alpha\). \textit{Lower panels}: The panels (d), (e), and (f) are derived from (a), (b), and (c), respectively. Each panel uses the same color to represent the same \(\gamma\), illustrating the evolution of \(x_i\) over time. It is evident that as \(\gamma\) decreases and \(\alpha\) increases, polarization reduces.}}
    \label{fig:App_multi}
\end{figure}

In Fig. \ref{fig:App_multi}, the upper panels show the final patterns, while the lower panels display the temporal evolution. In the initial regime, the new opinion is distinctly separated across the three modules, driven by the eigenvector of the most unstable eigenmode. However, as the system reaches equilibrium in the nonlinear regime, this scenario changes drastically, depending on the values of $\alpha$. Notably, even for cases with continuous bifurcations (panels (a), (d); (b), (e)), the further the system starts from the critical value of $\gamma$, the more mixed the opinion range becomes among modules, leading to a loss of polarization.

}



\end{document}